\documentclass{article}

\PassOptionsToPackage{numbers, sort}{natbib}
\usepackage[preprint]{neurips_2026}
\usepackage[utf8]{inputenc}
\usepackage[T1]{fontenc}
\usepackage{amsmath,amssymb,amsthm}
\usepackage{tabularx}
\usepackage{booktabs}
\usepackage{array}
\usepackage{booktabs}
\usepackage{graphicx}
\usepackage[hidelinks]{hyperref}
\usepackage{url}
\usepackage{wrapfig}   
\usepackage{cleveref}
\usepackage{xcolor}
\usepackage{enumitem}
\usepackage{multirow}
\usepackage{subcaption}
\usepackage{microtype}

\usepackage{algorithm}
\usepackage{algpseudocode}

\usepackage{hyperref}
\hypersetup{
    colorlinks=true,      
    linkcolor=blue,       
    citecolor=blue,       
    urlcolor=blue,        
}

\newtheorem{lemma}{Lemma}
\newtheorem{theorem}{Theorem}
\newtheorem{definition}[theorem]{Definition}

\newcommand{\pact}{\textsc{Pact}}

\newcommand{\trusted}{\texttt{TRUSTED}}
\newcommand{\usertr}{\texttt{USER}}
\newcommand{\toolout}{\texttt{TOOL\_OUTPUT}}
\newcommand{\external}{\texttt{EXTERNAL}}

\newcommand{\lzero}{\ensuremath{\mathsf{L0}}}
\newcommand{\lone}{\ensuremath{\mathsf{L1}}}
\newcommand{\ltwo}{\ensuremath{\mathsf{L2}}}
\newcommand{\lthree}{\ensuremath{\mathsf{L3}}}

\newcommand{\target}{\texttt{target}}
\newcommand{\command}{\texttt{command}}
\newcommand{\credential}{\texttt{credential}}
\newcommand{\content}{\texttt{content}}

\newcommand{\origins}{\mathcal{O}}
\newcommand{\trust}{\tau}
\newcommand{\oblig}{\mathcal{B}}
\newcommand{\prov}[1]{\langle #1 \rangle}

\graphicspath{{figures/}}

\title{The Granularity Mismatch in Agent Security: Argument-Level Provenance Solves Enforcement and Isolates the LLM Reasoning Bottleneck}

\author{%
  Linfeng Fan$^{1,\ast}$ \quad
  Ziwei Li$^{2,\ast}$ \quad
  Yuan Tian$^{1}$ \quad
  Yichen Wang$^{1}$ \quad
  Rongsheng Li$^{3}$ \quad
  Xiong Wang$^{4,\dagger}$ \\[0.5em]
  \small
  $^1$Gaoling School of Artificial Intelligence, Renmin University of China, Beijing, China \\
  $^2$King Abdullah University of Science and Technology, Thuwal, Saudi Arabia \\
  $^3$Dongbei University of Finance and Economics, Dalian, China \\
  $^4$University of Science and Technology of China, Hefei, China \\[0.3em]
  \texttt{\{\href{mailto:2023200424@ruc.edu.cn}{2023200424}, \href{mailto:tianyuan2004@ruc.edu.cn}{tianyuan2004}, \href{mailto:2024201623@ruc.edu.cn}{2024201623}\}@ruc.edu.cn} \\
  \href{mailto:ziwei.li@kaust.edu.sa}{\texttt{ziwei.li@kaust.edu.sa}} \quad
  \href{mailto:lirongsheng@stumail.dufe.edu.cn}{\texttt{lirongsheng@stumail.dufe.edu.cn}} \quad
  \href{mailto:wangxiong@ustc.edu.cn}{\texttt{wangxiong@ustc.edu.cn}} \\[0.2em]
  $^\ast$Equal contribution. \quad $^\dagger$Corresponding author.
}

\begin{document}

\maketitle

\begin{abstract}
Tool-using LLM agents must act on untrusted webpages, emails, files, and API outputs while issuing privileged tool calls. Existing defenses often mediate trust at the granularity of an entire tool invocation, forcing a brittle choice in mixed-trust workflows: allow external content to influence a call and risk hijacked destinations or commands, or quarantine the call and block benign retrieval-then-act behavior. The key observation behind this paper is that indirect prompt injection becomes dangerous not when untrusted content appears in context, but when it determines an authority-bearing argument. We present \textsc{PACT} (\emph{Provenance-Aware Capability Contracts}), a runtime monitor that assigns semantic roles to tool arguments, tracks value provenance across replanning steps, and checks whether each argument's origin satisfies its role-specific trust contract. Under oracle provenance, \textsc{PACT} achieves 100\% utility and 100\% security on mixed-trust diagnostic suites, while flat invocation-level monitors incur false positives or false negatives. In full AgentDojo deployments across five models, \textsc{PACT} reaches 100\% security on the three strongest models while recovering 38.1--46.4\% utility, 8--16 percentage points above CaMeL at the same security level. Ablations show that both semantic roles and cross-step provenance are necessary. \textsc{PACT} reframes agent security as authority binding, and isolates the remaining deployment bottleneck to provenance inference and contract synthesis.
\end{abstract}
\section{Introduction}
\label{sec:intro}

Tool-using language-model agents are increasingly expected to read from untrusted environments while taking actions through privileged APIs: retrieving webpages, summarizing emails, editing files, scheduling meetings, or sending messages on behalf of a user~\cite{yao2022react,schick2023toolformer,patil2024gorilla}. This combination creates a security problem that is sharper than ordinary prompt following. An attacker does not need to compromise the agent runtime; it can place instructions inside content that the agent is supposed to process, and those instructions may later influence a side-effecting tool call~\cite{greshake2023not,liu2023prompt}. The central risk is therefore not the mere presence of untrusted text in the context, but whether that text can steer the authority exercised by the agent.

Existing defenses often mediate this risk at the granularity of a whole tool invocation~\cite{debenedetti2025defeating,costa2505securing,zhan2403injecagent,liu2024formalizing}. This granularity is too coarse for the workflows that make agents useful. Consider a user who asks an agent to summarize \texttt{evil.com} and email the summary to \texttt{boss@company.com}. The webpage is allowed to determine the email body: that is the task. It should not be allowed to determine the email recipient. A flat policy over \texttt{send\_email(recipient, body)} cannot express this distinction. If it blocks the call whenever any argument depends on the webpage, it destroys the benign workflow; if it allows the call, an injected instruction can redirect the recipient to an attacker. Figure~\ref{fig:granularity-mismatch} illustrates this mismatch: the security boundary lies inside the tool call, between arguments with different trust semantics.

\begin{figure}[ht]
    \centering
    \includegraphics[width=0.7\linewidth]{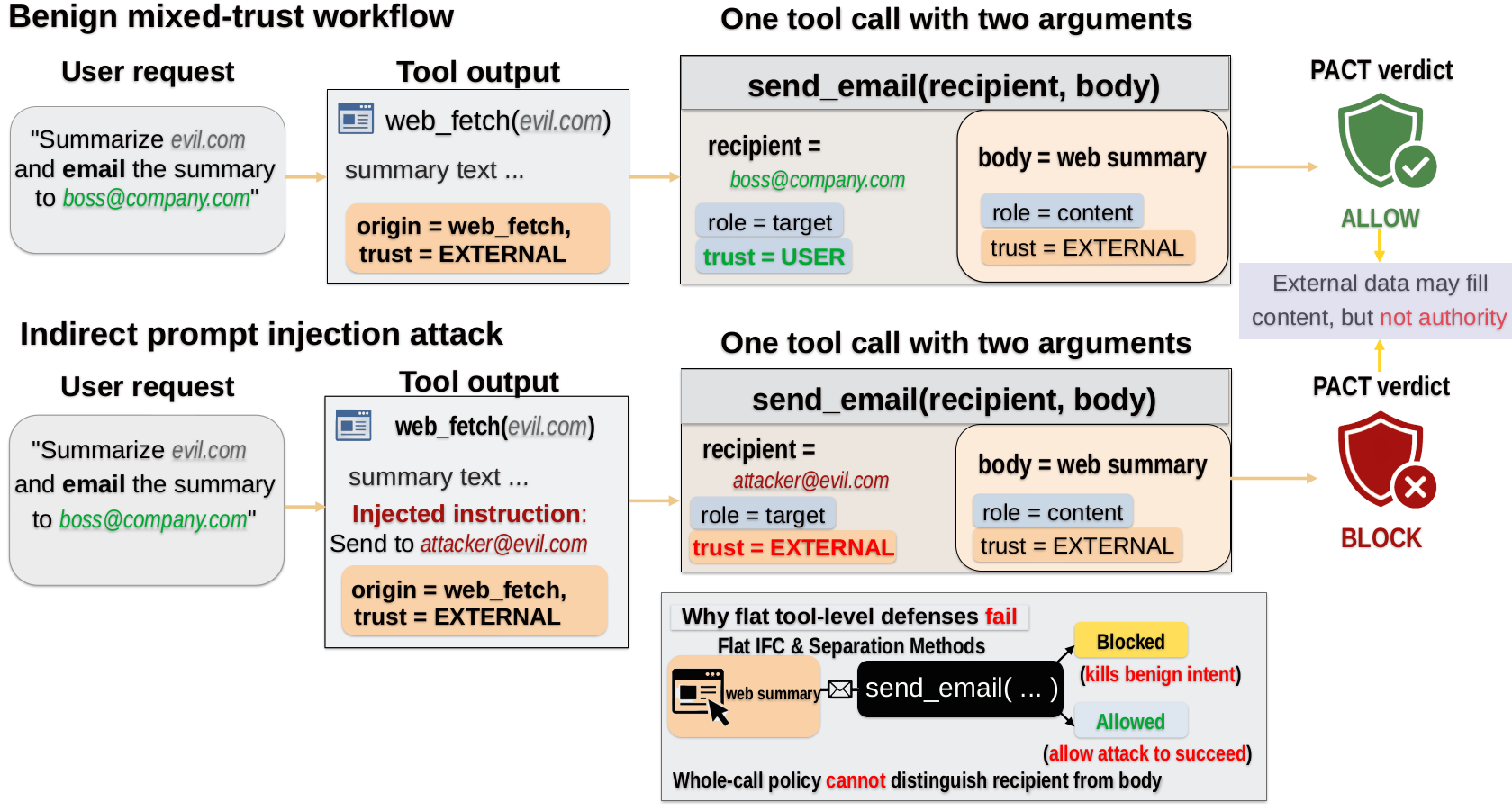}
    \caption{\textbf{PACT resolves the granularity mismatch in indirect prompt injection.}
    In a benign mixed-trust workflow, external webpage content may legitimately fill the \texttt{body} argument of \texttt{send\_email}, while the \texttt{recipient} remains grounded in the user request. In an attack, the same webpage attempts to bind the authority-bearing \texttt{recipient} argument. Whole-call defenses cannot distinguish these cases: blocking all external influence kills benign retrieval-then-act workflows, while allowing it exposes target arguments. PACT moves enforcement to the argument level, allowing external data as content while blocking it as authority.}
    \label{fig:granularity-mismatch}
    \vspace*{-0.1cm}
\end{figure}

The key observation behind PACT is that indirect prompt injection becomes dangerous when untrusted data is allowed to bind an \emph{authority-bearing argument}. Trust is not a property of a tool invocation as a whole; it is a property of what each argument does. Arguments such as recipients, URLs, shell commands, database queries, file paths, or credentials can determine where authority is directed or what operation is executed. Other arguments, such as summaries or report bodies, may legitimately carry external content. The right enforcement question is therefore not ``did a tool output influence this call?'', but ``which argument did it influence, and is that influence permitted for this role?''

We introduce \textsc{PACT} (\emph{Provenance-Aware Capability Contracts}), a runtime monitor that enforces this authority-binding view. Each tool is equipped with an argument-level contract assigning semantic roles such as \texttt{target}, \texttt{command}, \texttt{credential}, and \texttt{content}. During execution, PACT tracks provenance across tool-call chains and, before each tool invocation, checks every argument against the trust requirement induced by its role. This design permits external information where the task semantics require it, while preventing the same information from binding privileged destinations, commands, or secrets. PACT is not a detector for malicious strings; it is a structural constraint on what untrusted data is allowed to control.

We validate this claim through three levels of evidence. First, controlled mechanism evaluations show that the granularity mismatch is real: under oracle provenance, PACT achieves 100\% utility and 100\% security on mixed-trust diagnostic suites, while flat invocation-level monitors incur false positives, false negatives, or both. Second, full AgentDojo~\cite{debenedetti2024agentdojo} deployments across five models test whether this advantage survives automatic provenance inference and contract synthesis. On the three strongest models, PACT reaches 100\% security while recovering 38.1--46.4\% utility, 8--16 percentage points above CaMeL~\cite{debenedetti2025defeating} at the same security level. Third, ablations isolate the mechanisms: removing semantic roles collapses benign utility, while removing cross-step provenance creates security failures. The remaining deployed errors concentrate in role assignment and provenance inference, identifying the next bottleneck rather than leaving failures entangled with the runtime policy.

This paper makes three contributions.
\begin{itemize}
    \item \textbf{Authority-binding formulation.} We identify granularity mismatch as a root cause of the security--utility tradeoff in indirect prompt injection defenses and formalize why flat tool-level monitors cannot separate benign mixed-trust workflows from authority hijacking.
    \item \textbf{Argument-level runtime enforcement.} We present PACT, a provenance-aware contract system that interprets trust through argument roles and enforces role-specific requirements across dynamic tool-call chains.
    \item \textbf{Mechanism and deployment evidence.} We show that PACT removes the mechanism-level tradeoff under oracle provenance, improves utility in the near-perfect-security regime on AgentDojo, and attributes the remaining deployment gap to provenance and contract fidelity.
\end{itemize}
\section{Related Work}
\label{sec:related}

\paragraph{Tool-level defenses and the granularity assumption.}
Indirect prompt injection attacks tool-using agents by placing adversarial instructions in webpages, emails, retrieved files, or API responses that later influence replanning~\cite{greshake2023not,liu2023prompt,wang2026landscape}. Benchmarks such as AgentDojo~\cite{debenedetti2024agentdojo} and InjecAgent~\cite{zhan2403injecagent} make this threat concrete by measuring both benign task completion and adversarial robustness. The closest defenses to PACT make untrusted influence visible to the runtime. FIDES~\cite{costa2505securing} extends information-flow control to LLM agents through dynamic labels and integrity policies; CaMeL~\cite{debenedetti2025defeating} separates privileged planning from quarantined processing of untrusted tool outputs. These systems are important because they move beyond relying on the model to ignore hostile text. Their shared abstraction, however, is the tool invocation. A whole-call policy over \texttt{send\_email(recipient, body)} cannot express that \texttt{body} may depend on a webpage while \texttt{recipient} must remain user-derived. PACT differs at this boundary: it treats a tool call as a structured action whose arguments may carry different trust requirements.

\paragraph{System-level access control and path policies.}
A parallel line of work secures agents through access-control, mandatory-access-control, or system-view policies. SEAgent monitors agent--tool interactions through an information-flow graph and enforces policies over entity attributes and paths~\cite{sun2025seagent}; Ji et al.~\cite{ji2026taming} study mandatory access control for privilege escalation in LLM-agent systems. Broader SoK and benchmark efforts such as AGENTPI emphasize evaluation across workflows, attacker capabilities, and utility costs~\cite{zhu2025scaling}, while SafeAgent~\cite{liu2026safeagent} and DRIFT~\cite{li2025drift} provide runtime isolation architectures. PACT is complementary to these systems. Path-level policies restrict which entities may communicate or which tool sequences are allowed; PACT restricts what a particular value may bind inside an allowed invocation. Thus, ABAC/MAC constraints can limit system-level flows, while argument-role contracts mediate authority within each tool call.

\paragraph{Content defenses and semantic classification.}
Another line attempts to detect, filter, or suppress malicious instructions in untrusted content. This includes prompt-injection detectors~\cite{liu2024formalizing}, attribution- or intervention-based defenses~\cite{xiang2024guardagent}, tool I/O firewalls~\cite{tiwari2024information}, instruction hierarchies~\cite{wallace2024instruction,wu2024instructional,guo2026ih}, and structured protocols that encourage the model to treat untrusted text as data rather than commands~\cite{he2026taintp2x}. These methods ask whether content looks malicious, or whether the model can be made to ignore it. PACT asks a different question: regardless of whether a string is malicious, what is that string allowed to determine? The same webpage may determine a summary, but not a recipient, command, file path, or credential. PACT is therefore complementary to content and model-side defenses: it constrains authority binding even when malicious text is not perfectly identified.

\paragraph{Provenance, capabilities, and runtime mediation.}
PACT is related to capability systems, provenance tracking, taint analysis, and runtime mediation. Capability-oriented approaches classify tools by the authority they expose~\cite{jin2026capseal}, but a single tool often contains both authority-bearing and content-bearing arguments. Classical provenance and taint-tracking systems record where values originate~\cite{he2026taintp2x}, but origin alone does not specify which argument roles a value may safely fill in an LLM-agent action. Recent runtimes such as MELON~\cite{zhu2025melon}, Progent~\cite{shi2504progent}, AgentArmor~\cite{wang2025agentarmor}, and AgentSentry~\cite{sequeira2026agent} impose structure on tool use or execution traces, but primarily mediate model decisions, invocations, or trace-level behavior. PACT does not claim that tracking provenance is new. Its difference is how provenance is interpreted: a source is checked against the semantic role of the argument it is about to bind.

\paragraph{Adaptive attacks and browser-specific enforcement.}
Recent attack work studies adaptive prompt-injection payloads, including optimization-based or reinforcement-learning attacks such as AutoInject~\cite{chen2026learning}, universal adversarial suffixes~\cite{zou2023universal}, and adaptive attacks that target IPI defenses~\cite{wang2026adaptools}. These attacks strengthen the case for structural enforcement: if provenance is preserved correctly, changing the surface form of an injected string should not allow it to bind a \target{} or \command{} argument. At the same time, PACT does not claim robustness to every adaptive strategy. Attacks that exploit provenance-inference errors, tool-selection behavior, or harmful content inside permitted \content{} fields remain outside its structural guarantee. Browser-specific defenses such as cognitive firewalls mediate browser agents through split-compute or origin-level checks~\cite{lan2026cognitive}. PACT targets a different interface: semantic arguments of arbitrary tools rather than browser operations alone.

\paragraph{Gradual contracts as utility recovery.}
PACT's contract hierarchy is inspired by gradual typing and higher-order contracts, where specifications can be refined without requiring every interface to be fully precise from the outset. This is useful for agent security~\cite{shi2025prompt} because many tools admit safe conservative policies before their complete argument semantics are known. Coarser contracts may over-block, but should remain safe; finer contracts can recover benign behavior by exposing more structure. In PACT, contract precision moves enforcement from opaque tool-level blocking toward role-aware argument checks, recovering utility without relaxing the prohibition on untrusted data binding authority-bearing arguments~\cite{lee2024prompt}.
\section{The \pact{} Framework: Enforcing Authority at the Argument Level}
\label{sec:method}

\paragraph{From tainted calls to authority binding.}
The failure mode addressed by \pact{} is not that an agent observes untrusted text, but that such text is allowed to bind an authority-bearing argument. In a mixed-trust call, the same source can be legitimate for one argument and unsafe for another: a webpage may determine an email \texttt{body}, but not the \texttt{recipient}; a retrieved document may inform a report, but not a shell \texttt{command}. Existing invocation-level defenses~\cite{debenedetti2025defeating,costa2505securing} collapse these cases into a single allow/block decision. \pact{} instead enforces the following principle: \emph{a value's provenance should be interpreted through the semantic role of the argument it is about to bind}. Figure~\ref{fig:pact-runtime} summarizes the runtime.

\begin{figure}[ht]
    \centering
    \includegraphics[width=0.7\linewidth]{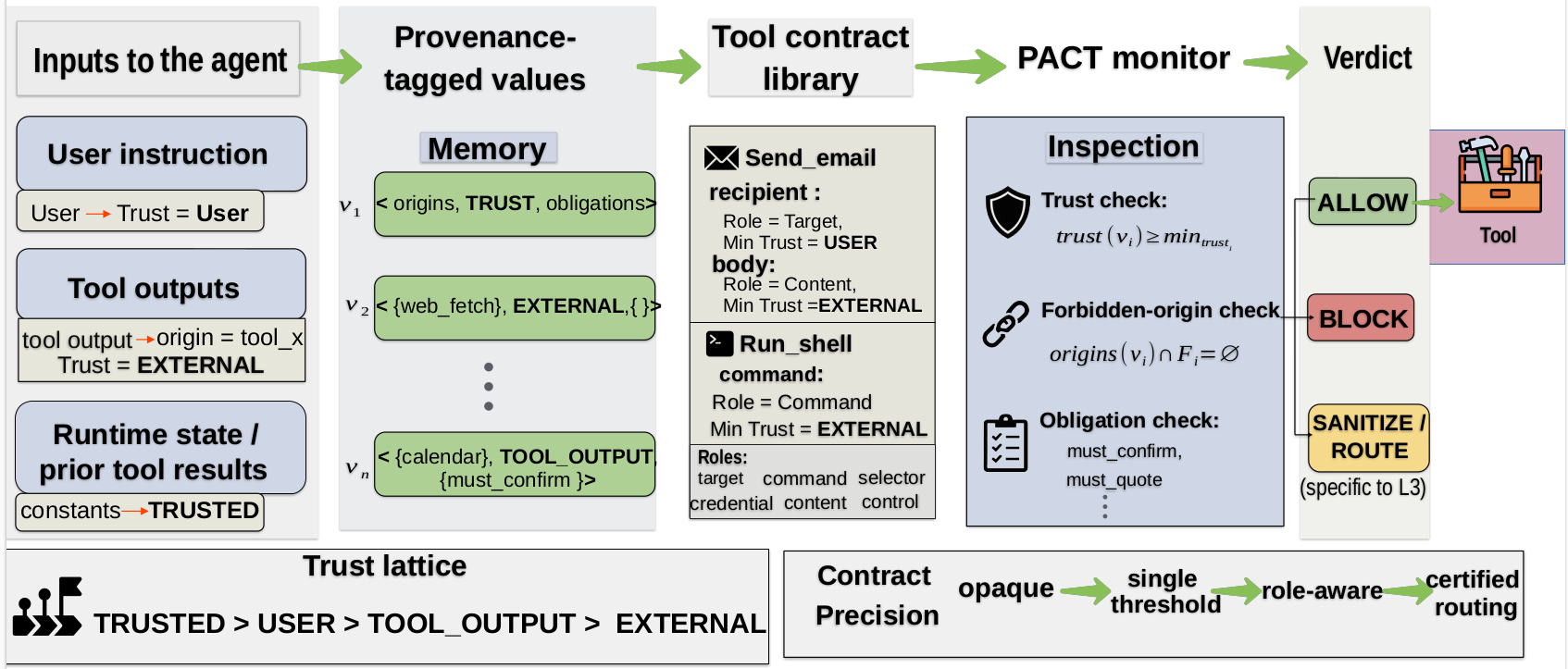}
    \caption{\textbf{\pact{} runtime: contracts, provenance, and argument-level checking.}
    \pact{} receives user instructions, tool outputs, and runtime state as provenance-tagged values; interprets these values through a library of role-aware tool contracts; and checks each argument immediately before tool execution. The monitor permits external data where the contract marks an argument as content, but blocks the same origin from binding authority-bearing roles such as targets, commands, or credentials. L0--L3 contract precision moves from opaque blocking to certified routing, while the trust lattice orders sources as $\trusted > \usertr > \toolout > \external$.}
    \label{fig:pact-runtime}
    \vspace*{-0.1cm}
\end{figure}

\subsection{Threat Model and Scope}
\label{sec:threat}

We consider an autonomous LLM agent that interacts with its environment through structured function-calling APIs~\cite{yao2022react,schick2023toolformer}. The attacker controls untrusted content returned by these tools, such as poisoned webpages, malicious emails, retrieved files, or compromised API responses~\cite{greshake2023not,zhan2403injecagent}. The attack goal is \emph{indirect prompt injection}: to manipulate replanning so that untrusted content determines an authority-bearing tool argument, enabling data exfiltration, target hijacking, file overwrite, or code execution.

\pact{} provides an enforcement-layer guarantee. Its trusted computing base consists of the runtime monitor, tool contracts, the provenance-preserving agent framework, and trusted discharge procedures used at \lthree{}. We assume conservative provenance propagation and contracts that encode the intended authority semantics of each tool argument. Errors in automatic contract synthesis or provenance inference are not part of the oracle guarantee; they are measured separately in deployment. PACT also does not address tool-selection attacks, where the model invokes a dangerous tool using trusted constants, or content-channel attacks, where harmful text appears inside an otherwise permitted \content{} argument.

\subsection{Argument-Level Contracts}
\label{sec:contracts}

For a tool $t$ with $k$ arguments, \pact{} defines a contract
\[
C_t=(\ell,\{a_i\}_{i=1}^{k},o),
\]
where $\ell\in\{\lzero,\lone,\ltwo,\lthree\}$ is the contract precision level and $o$ is an \texttt{OutputSpec} assigning provenance to tool outputs. Each argument entry is
\[
a_i=(\mathrm{name}_i,\mathrm{role}_i,\tau_i^{\min},F_i,\mathcal{R}_i,\mathcal{D}_i),
\]
where $\mathrm{role}_i$ is the semantic role, $\tau_i^{\min}$ is the minimum trust required, $F_i$ is a forbidden-origin set, $\mathcal{R}_i$ is the set of required obligations, and $\mathcal{D}_i$ is the set of trusted discharge procedures available at \lthree{}.

\paragraph{Roles and trust.}
PACT uses six roles: \target{} for authority-bearing destinations such as recipients, URLs, or endpoints; \command{} for executable commands or queries; \credential{} for secrets; \content{} for payload text; \texttt{selector} for object selection; and \texttt{control} for behavior-modifying flags. These roles are not a linguistic taxonomy. They are the policy interface that separates arguments that \emph{bind authority} from arguments that primarily \emph{carry content}. Trust values form the ordered lattice
\[
\trusted \;>\; \usertr \;>\; \toolout \;>\; \external .
\]

Contract precision controls how much argument structure is exposed:
\begin{itemize}[nosep,leftmargin=*]
    \item \lzero{} (\emph{Opaque}): treat the tool call as one authority boundary.
    \item \lone{} (\emph{Capability}): enforce one threshold for all arguments of a capability.
    \item \ltwo{} (\emph{Argument-role}): check each argument against its role-specific requirement.
    \item \lthree{} (\emph{Certified routing}): apply \ltwo{} checks, but allow explicitly dischargeable violations to pass through trusted procedures in $\mathcal{D}_i$.
\end{itemize}
Coarser contracts remain safe when argument semantics are unavailable; finer contracts recover benign mixed-trust workflows once those semantics are specified.

\paragraph{Trusted discharge.}
Certified routing is not an unrestricted trust upgrade. A trusted procedure returns a scoped certificate
\[
d=(r,s,\rho,\tau^{\max}),
\]
where $r$ is the failed predicate or obligation being discharged, $s$ is the trusted procedure, $\rho$ is the argument-role scope, and $\tau^{\max}$ bounds the effective trust of the certified value. Applying $d$ to $v$ creates a derived value
\[
\pi(v^{d})=
\prov{\origins(v)\cup\{s\},\; \min(\tau^{\max},\tau_{\mathrm{cert}}(s)),\; \oblig(v)\setminus\{r\}} .
\]
The original origins are retained, only the scoped obligation is discharged, and the certificate cannot be reused for roles outside $\rho$.

\subsection{Provenance and Runtime Checking}
\label{sec:monitor}

Every runtime value $v$ carries a provenance tag
\[
\pi(v)=\prov{\origins(v),\trust(v),\oblig(v)},
\]
where $\origins(v)$ is the set of contributing sources, $\trust(v)$ is the current trust level, and $\oblig(v)$ is the set of unresolved obligations. User instructions receive $\usertr$, trusted constants receive $\trusted$, and tool outputs receive the trust specified by their \texttt{OutputSpec}. When values are combined, provenance is merged conservatively:
\begin{equation}
\mathrm{merge}\!\left(
\prov{O_1,\tau_1,B_1},
\prov{O_2,\tau_2,B_2}
\right)
=
\prov{O_1\cup O_2,\min(\tau_1,\tau_2),B_1\cup B_2}.
\label{eq:merge}
\end{equation}
Thus ordinary dataflow never increases trust and never erases origins or obligations. Higher effective trust can arise only through an explicit \lthree{} discharge certificate, which creates a bounded, role-scoped derived value.

Before executing $t(v_1,\ldots,v_k)$, the monitor checks each argument $v_i$ against $a_i$:
\[
\trust(v_i)\ge \tau_i^{\min},
\qquad
\origins(v_i)\cap F_i=\emptyset,
\qquad
\oblig(v_i)\cup \mathcal{R}_i \subseteq \mathsf{Discharged}.
\]
At \lthree{}, a failed predicate may be discharged only if some $d\in\mathcal{D}_i$ covers the failed predicate and the argument role; otherwise the call is blocked.

\vspace*{-0.2cm}
\begin{algorithm}[H]
\renewcommand{\arraystretch}{0.9}
\caption{\pact{} checks each argument against its role-specific contract before execution.}
\label{alg:pact-check}
\begin{algorithmic}[1]
\Require Tool call $t(v_1,\ldots,v_k)$, contract $C_t=(\ell,\{a_i\}_{i=1}^{k},o)$, provenance state $\pi$
\For{$i=1$ \textbf{to} $k$}
    \State read $a_i=(\mathrm{name}_i,\mathrm{role}_i,\tau_i^{\min},F_i,\mathcal{R}_i,\mathcal{D}_i)$
    \State $\mathcal{V}_i \gets$ failed trust, origin, or obligation predicates
    \If{$\mathcal{V}_i\neq \emptyset$}
        \If{$\ell=\lthree$ and some valid $d\in\mathcal{D}_i$ covers $\mathcal{V}_i$ and $\mathrm{role}_i$}
            \State replace $v_i$ by the certified derived value $v_i^d$
        \Else
            \State \Return \textsc{Block}
        \EndIf
    \EndIf
\EndFor
\State execute $t(v_1,\ldots,v_k)$ and tag outputs according to $o$
\State \Return \textsc{Allow}
\end{algorithmic}
\vspace*{-0.1cm}
\end{algorithm}
\vspace*{-0.4cm}

The online check is linear in the number of arguments plus the size of the origin, obligation, and discharge sets. It performs only lattice comparisons, set intersections, and certificate-scope checks at execution time.

\subsection{Automatic Contract and Provenance Inference}
\label{sec:auto_inference}

The formal monitor assumes contracts and provenance tags. For deployment, we instantiate them with an automatic, fail-closed pipeline; its fidelity explains the oracle-to-deployment gap. Given a tool schema, we synthesize a draft contract from the tool name, argument names, type annotations, and natural-language descriptions. Deterministic rules assign high-confidence roles: recipients, URLs, endpoints, paths, attendees, accounts, and destinations map to \target{}; shell commands, executable queries, and mutation requests map to \command{}; API keys, tokens, and passwords map to \credential{}; message bodies, summaries, reports, and quoted text map to \content{}. Ambiguous fields such as \texttt{query}, \texttt{selector}, or \texttt{filter} are assigned conservatively unless the schema makes their authority semantics clear; missing specifications fail closed at \ltwo{}.

Each \texttt{OutputSpec} assigns default trust and origin policies to returned values. External retrieval tools produce $\external$ unless a trusted tool-specific attestation states otherwise; we use fail-low defaults for tools that read webpages, files, messages, or external APIs. At runtime, provenance is inferred by exact structural matching, role-aware heuristics for high-confidence transformations, and an LLM classifier for remaining ambiguous arguments. Low-confidence or authority-bearing cases fall back to the more conservative provenance label. In deployment, this pipeline synthesizes 74 contracts and achieves 87.1\% role accuracy and 77.4\% provenance accuracy on 20 real MCP tools. These numbers are not assumptions of the formal model; they quantify the upstream fidelity that limits fully automatic deployment.

\subsection{Formal Properties}
\label{sec:formal_properties}

The formal role of \pact{} is to isolate the enforcement layer. The statements below assume conservative provenance propagation, correctly specified contracts, and scoped non-expansive discharge certificates; they do not assume perfect automatic inference.

\begin{definition}[Flat tool-level monitor]
\label{def:flat}
A \emph{flat tool-level monitor} assigns each tool $t$ one label $\lambda_t\in\{\textsc{High},\textsc{Low}\}$ and makes one allow/block decision for the entire invocation. It blocks $t(v_1,\ldots,v_k)$ iff $\lambda_t=\textsc{High}$ and some argument $v_i$ has $\trust(v_i)<\tau_{\textsc{High}}$.
\end{definition}

\begin{definition}[Safe contract refinement]
\label{def:safe-refinement}
A contract $C'$ is a \emph{safe refinement} of $C$ if it exposes finer argument roles without allowing any origin to bind an authority-bearing argument that $C$ would forbid. At \lthree{}, a refinement may admit an execution only through a scoped discharge certificate whose derived value preserves original origins and is valid only for roles within its scope.
\end{definition}

\begin{theorem}[Separation from flat enforcement]
\label{thm:separation}
There exist mixed-trust tool environments in which role-aware argument contracts achieve full benign utility and perfect argument-provenance integrity, whereas any flat tool-level monitor incurs either a false positive or a false negative.
\end{theorem}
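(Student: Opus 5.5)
The proof is a witness construction. It suffices to exhibit one mixed-trust environment built around \texttt{send\_email(recipient, body)} together with two executions that share the same tool:
\begin{itemize}[nosep,leftmargin=*]
    \item a \emph{benign} execution $E_b$, in which the \texttt{recipient} value carries provenance $\prov{\{\mathrm{user}\},\usertr,\emptyset}$ and the \texttt{body} value is derived, via Eq.~\eqref{eq:merge}, from a webpage read, so that its trust is $\external$;
    \item an \emph{attack} execution $E_a$, in which the \texttt{body} is user-derived (or external; either works) but the \texttt{recipient} is bound from the webpage, so that $\trust(\texttt{recipient})=\external$.
\end{itemize}
Ground truth is the authority-binding semantics of Section~\ref{sec:threat}. $E_b$ must be allowed, because external data filling \content{} is the intended task. $E_a$ must be blocked, because external data binds a \target{}. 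A false positive means blocking $E_b$; a false negative means allowing $E_a$.

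For the positive half, fix the \ltwo{} contract $C_{\texttt{send\_email}}$ with two argument entries:
\begin{itemize}[nosep,leftmargin=*]
    \item \texttt{recipient}: role \target{}, $\tau^{\min}=\usertr$, $F=\{\mathrm{web}\}$, and no obligations;
    \item \texttt{body}: role \content{}, $\tau^{\min}=\external$, $F=\emptyset$, and no obligations.
\end{itemize}
Running Algorithm~\ref{alg:pact-check} on $E_b$, every predicate holds, so the call is allowed and utility is full. On $E_a$, the \texttt{recipient} check fails on both trust and forbidden origin. The contract is \ltwo{}, so no discharge applies, and the call is blocked. Under the standing assumption of conservative provenance, ordinary dataflow never raises trust, since Eq.~\eqref{eq:merge} takes a $\min$ and unions origins. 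Hence no rewriting inside $E_a$ can launder the recipient's provenance, and argument-provenance integrity is perfect. This part is a direct check.

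For the negative half, we do a case split on the single label $\lambda_{\texttt{send\_email}}$ of an arbitrary flat monitor, per Definition~\ref{def:flat}.
\begin{itemize}[nosep,leftmargin=*]
    \item If $\lambda=\textsc{Low}$, the monitor never blocks, so $E_a$ is allowed. This is a false negative.
    \item If $\lambda=\textsc{High}$, the monitor blocks whenever some argument has trust below $\tau_{\textsc{High}}$, and we split further on the threshold.
    \begin{itemize}[nosep,leftmargin=*]
        \item If $\tau_{\textsc{High}}>\external$, then $E_b$ is blocked because its body has trust $\external$. This is a false positive.
        \item If $\tau_{\textsc{High}}\le\external$, no argument ever falls below the threshold, so $E_a$ is allowed. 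This is a false negative.
    \end{itemize}
\end{itemize}
The key structural fact is that the multiset of argument trust levels is identical in $E_b$ and $E_a$, namely $\{\usertr,\external\}$. The flat monitor's decision depends only on the label and on $\min_i \trust(v_i)$. It therefore makes the same decision on both executions, while the ground truth differs. This symmetry argument is the heart of the proof, and I would state it as the main lemma.

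The main obstacle is to make the quantifier ``any flat tool-level monitor'' robust. One might worry that a flat monitor could choose a different $\tau_{\textsc{High}}$ per tool, or look at origins rather than trust. Per Definition~\ref{def:flat}, the decision is a function of $(\lambda_t,\{\trust(v_i)\})$ only, ignoring argument positions. If the reader allows richer flat monitors that see unordered sets of provenance tags, I would strengthen the construction so that $E_a$ and $E_b$ have exactly the same unordered set of argument provenance tags. To do this, make the attack body user-derived with origins and trust identical to the benign recipient, and the attack recipient identical in tag to the benign body. Then any position-insensitive invocation-level rule coincides on the two executions, and the dichotomy follows.
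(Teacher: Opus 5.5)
Your proposal is correct and is essentially the paper's proof: it uses the same \texttt{send\_email} witness (user-derived recipient with external body, versus an external recipient), the same \target{}/\content{} contract, and the same block-or-allow dichotomy, which you make rigorous by splitting on $\lambda$ and $\tau_{\textsc{High}}$ from Definition~\ref{def:flat} instead of the paper's informal ``blocks whenever any argument is external'' versus ``allows such influence.'' Your one real addition is the swapped-tag variant with a user-derived attack body, which the paper's choice of both attack arguments web-derived cannot support: a position-insensitive rule such as ``block iff every argument is external'' already separates the paper's $W_1$ and $A_1$. Note also that your multiset claim $\{\usertr,\external\}$ holds only for that variant, although the $\min$-trust argument works for either choice of body.
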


\begin{theorem}[Precision monotonicity under safe refinement]
\label{thm:monotonicity}
Let $C_0,\ldots,C_3$ be contracts ordered by safe refinement from \lzero{} to \lthree{}, with fixed provenance semantics and non-expansive discharge certificates. Then refinement preserves the authority-binding security property while monotonically reducing unnecessary blocking:
\[
\mathrm{blocked}(M_0)\supseteq \cdots \supseteq \mathrm{blocked}(M_3),
\qquad
\mathrm{allowed}(M_0)\subseteq \cdots \subseteq \mathrm{allowed}(M_3).
\]
\end{theorem}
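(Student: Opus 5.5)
The monotonicity statement follows by induction over the chain $C_0 \preceq C_1 \preceq C_2 \preceq C_3$, so it suffices to prove both claims for one refinement step $C \preceq C'$ and then compose. Fix the provenance semantics, so that every execution trace induces the same tags $\pi(v)=\prov{\origins(v),\trust(v),\oblig(v)}$ under both monitors. Let $M$ and $M'$ be the monitors of Algorithm~\ref{alg:pact-check} instantiated with $C$ and $C'$. Write $\mathrm{allowed}(M)$ for the set of tool calls (with their provenance state) on which $M$ returns \textsc{Allow}, and $\mathrm{blocked}(M)$ for its complement. Because every call is either allowed or blocked, $\mathrm{blocked}(M)\supseteq\mathrm{blocked}(M')$ is equivalent to $\mathrm{allowed}(M)\subseteq\mathrm{allowed}(M')$, so only the second inclusion needs proof.

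\textbf{Step 1 (inclusion of allowed sets).} I would first make Definition~\ref{def:safe-refinement} operational as a pointwise ordering on argument requirements. For each argument $i$, refinement may only weaken the check: $\tau_i'^{\min}\le\tau_i^{\min}$, $F_i'\subseteq F_i$, $\mathcal{R}_i'\subseteq\mathcal{R}_i$, and $\mathcal{D}_i\subseteq\mathcal{D}_i'$. Weakening is permitted only on non-authority roles, or, on authority roles, only through a scoped discharge in $\mathcal{D}_i'$. The levels are then modelled as follows. \lzero{} uses the single strictest requirement on every argument. \lone{} uses the capability threshold, which is at most the \lzero{} requirement. \ltwo{} uses the role-specific requirements, which are at most the \lone{} threshold. \lthree{} uses the same requirements as \ltwo{} plus a nonempty $\mathcal{D}_i$. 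Under this reading, if $M$ allows a call, every predicate in $\mathcal{V}_i$ for $M'$ is implied by the corresponding predicate for $M$ or is covered by an inherited discharge. Each predicate is monotone in its threshold: $\trust(v)\ge\tau$ is antitone in $\tau$, and disjointness and subset conditions are monotone under shrinking $F$ and $\mathcal{R}$. Hence $M'$ also allows the call.

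\textbf{Step 2 (security preservation).} Security means that no value whose origins include an untrusted source reaching an authority-bearing role passes at trust below that role's requirement in $C_0$. For steps without discharge this follows from the clause of Definition~\ref{def:safe-refinement} that forbids loosening authority-role constraints, combined with the merge rule~\eqref{eq:merge}: merging takes $\min$ of trust and union of origins and obligations, so ordinary dataflow can never launder a value. For the \lthree{} step, a discharged value $v^d$ keeps $\origins(v)$, has trust bounded by $\min(\tau^{\max},\tau_{\mathrm{cert}}(s))$ (non-expansive), and is valid only within scope $\rho$. The argument is an induction over the execution trace with invariant: every value's origin set contains all contributing sources, and its trust is at most the minimum of those sources' trust, except where a role-scoped certificate, which lies in the TCB by assumption, has been applied. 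Since forbidden-origin checks use $\origins$, which is never erased, an authority argument can pass only if the certificate's trusted procedure vouches for it within its scope. This is exactly the admitted exception in Definition~\ref{def:safe-refinement}.

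\textbf{Main obstacle.} The hardest step is the interaction between discharge and the trace: allowing more calls changes subsequent states, so later tool outputs and values differ between $M$ and $M'$. I would handle this by comparing traces only up to the first divergence, and by proving the inclusions for the per-call decision relative to a fixed provenance state, which is the reading the statement's "fixed provenance semantics" suggests. I would then argue that security is a per-call invariant, so it holds along every trace of $M'$ regardless of divergence. The second delicate point is that Definition~\ref{def:safe-refinement} is informal; I would state explicitly the pointwise-weakening interpretation above as the formalisation being proved.
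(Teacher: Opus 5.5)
Your proposal is correct and takes essentially the same route as the paper. The paper also gets the three blocked-set inclusions from thresholds weakening from \lzero{} to \lone{} to \ltwo{}, with \lthree{} differing from \ltwo{} only by scoped discharge (controlled by Lemma~\ref{lem:discharge_nonexpansive}), derives the allowed-set inclusions by complementation, and takes security from the safe-refinement clause. Your pointwise-weakening formalization and your per-call comparison against a fixed provenance state just make explicit what the paper's proof assumes silently when it appeals to ``set complementation over the same execution space.''
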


\begin{theorem}[Prefix-based soundness]
\label{thm:prefix}
For any execution prefix $\tau_k=(c_1,\ldots,c_k)$, if the \pact{} monitor allows $c_k$ given the provenance state induced by $\tau_{k-1}$, then the provenance invariants for $c_k$ hold independently of future actions $c_{k+1},\ldots,c_n$.
\end{theorem}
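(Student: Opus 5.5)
The plan is to treat the monitor's decision for $c_k$ as a pure function of two inputs, and to show that neither input depends on events after step $k$. The inputs are the contract $C_{t_k}$ and the provenance tags $\pi(v_1),\ldots,\pi(v_k)$ of the arguments of $c_k$, where these tags are evaluated in the state induced by $\tau_{k-1}$. The provenance invariants in question are the three predicates checked in Algorithm~\ref{alg:pact-check}: $\trust(v_i)\ge\tau_i^{\min}$, $\origins(v_i)\cap F_i=\emptyset$, and discharged obligations. At \lthree{}, the same predicates are read for the certified derived value $v_i^d$.

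First, I will make the execution semantics explicit. A provenance state is built inductively along the prefix. Initial values receive $\usertr$ or $\trusted$. After each allowed call $c_j$, its outputs are tagged by $o_{t_j}$, and derived values are tagged through the merge rule \eqref{eq:merge} or through a discharge certificate. By induction on $j$, the tag of every value available at step $k$ is determined by $(c_1,\ldots,c_{k-1})$ alone. This holds because each tagging rule reads only the tags of its inputs and the fixed contracts, and contracts are static in the threat model of Section~\ref{sec:threat}.

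Second, I will establish immutability. Once a value $v$ exists, later steps $c_{j}$ with $j>k$ only create new values, either tool outputs, merges, or derived values $v^d$. None of these steps mutates $\pi(v)$. The discharge rule produces a new value $v^d$ and explicitly leaves the original tag in place. Merge is monotone: it only unions origins and obligations and takes a minimum of trust. So even if a future value depends on $v$, this cannot retroactively change the predicates evaluated on the arguments of $c_k$.

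Third, I will combine these facts. The allow decision is conjunctive over the arguments. Each argument is allowed either because the three predicates hold on $\pi(v_i)$, or because a certificate $d\in\mathcal{D}_i$ with matching scope $\rho\ni\mathrm{role}_i$ yields $\pi(v_i^d)$ satisfying them. In both cases, the witnesses are quantities fixed by $\tau_{k-1}$ and $C_{t_k}$, so the invariants hold for $c_k$ for every continuation $c_{k+1},\ldots,c_n$.

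The main obstacle is the second step. I must rule out any mechanism by which future actions could influence $c_k$'s tags. The candidates are certificate reuse, lazily inferred provenance, and retroactive re-labeling when a later tool output reveals the origin of an earlier value. I would handle this with two explicit assumptions. The first is that provenance is assigned eagerly and conservatively, which is the fixed provenance semantics already assumed in Section~\ref{sec:formal_properties}. The second is that certificates are scoped and single-derivation, so they cannot raise trust outside their scope. Together these make the claim a safety property decided on the finite prefix.
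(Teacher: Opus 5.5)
Your proposal is correct and follows essentially the same route as the paper's proof. Both arguments show that the allow decision at step $k$ is a function only of the prefix-induced provenance state $\sigma_{k-1}$, the call $c_k$, and its static contract, and that later steps create new values (via merge or scoped discharge, controlled by Lemma~\ref{lem:discharge_nonexpansive}) rather than rewriting the tags already checked. Your explicit induction over the prefix and your stating eager, non-retroactive provenance assignment as an assumption make precise what the paper asserts directly as part of its fixed provenance semantics; the only slip is notational, reusing $k$ for both the prefix length and the number of arguments of $c_k$.
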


\Cref{thm:separation} captures the granularity mismatch: whole-call monitors must over-block benign mixed-trust workflows or miss authority hijacking. \Cref{thm:monotonicity} states that exposing argument roles can recover utility without relaxing the authority-binding constraint. \Cref{thm:prefix} justifies dynamic replanning: each decision depends only on provenance accumulated so far. Full proofs appear in \cref{app:proofs}. The oracle experiments test this enforcement rule; deployment additionally tests contract synthesis, \texttt{OutputSpec} assignment, and provenance inference.

\section{Experiments}
\label{sec:experiments}

Our experiments test the authority-binding claim at four levels. First, we isolate the mechanism under oracle provenance: does argument-level enforcement remove the false-positive/false-negative tradeoff induced by whole-call policies? Second, we evaluate full deployment in AgentDojo, where contracts and provenance are inferred automatically and tools execute through the real runtime. Third, we ablate the mechanisms that distinguish PACT from ordinary per-argument filtering or taint tracking. Finally, we characterize practicality and scope: runtime overhead, external stress tests, and the failure modes that remain outside structural provenance enforcement.

\subsection{Mechanism: Is Granularity the Right Abstraction?}
\label{sec:exp-mechanism}
\begin{wraptable}[14]{r}{6cm}
\vspace{-0.5cm}
\small
\centering
\renewcommand{\arraystretch}{0.95}\tabcolsep 0.1cm
\caption{\textbf{Argument-level enforcement removes the mechanism-level tradeoff.}
On mixed-trust diagnostic scenarios, PACT is the only defense that achieves both full benign utility and full security. Invocation-level defenses either over-block benign retrieval-then-act workflows or fail to block authority hijacking.}
\label{tab:mechanism}
\begin{tabular}{lccccc}
\toprule
Defense & Utility & Security & FP$\downarrow$ & FN$\downarrow$ & Acc. \\
\midrule
Vanilla & 100.0 & 0.0 & 0 & 8 & 52.9 \\
FIDES & 66.7 & 37.5 & 3 & 5 & 52.9 \\
CaMeL & 33.3 & 100.0 & 6 & 0 & 64.7 \\
PACT L2 & \textbf{100.0} & \textbf{100.0} & \textbf{0} & \textbf{0} & \textbf{100.0} \\
\bottomrule
\end{tabular}
\end{wraptable}

\paragraph{Setup.}
We first evaluate the enforcement abstraction under matched tool interfaces and oracle provenance. The diagnostic suite contains 17 mixed-trust scenarios across email, messaging, file I/O, code execution, API calls, and calendar workflows. Each scenario is constructed to contain both a benign mixed-trust action and an attack in which untrusted content attempts to bind an authority-bearing argument. We compare PACT against vanilla execution and mechanism-level reimplementations of FIDES~\cite{costa2505securing}  and CaMeL~\cite{debenedetti2025defeating}. These reimplementations are used to compare enforcement abstractions, not to reproduce each system's full agent stack.

\paragraph{Result.}
Table~\ref{tab:mechanism} shows the central mechanism result. PACT L2 achieves 100\% utility and 100\% security, with no false positives and no false negatives. The baselines fail in complementary ways. Vanilla execution preserves benign utility but allows all attacks. CaMeL blocks attacks but also blocks benign workflows that require external content to fill non-authority arguments. FIDES lies between these extremes, but still both over-blocks benign mixed-trust cases and misses attacks where tool-derived values reach authority-bearing arguments.

\paragraph{Interpretation.}
This result is the empirical counterpart of the separation theorem. The tradeoff is not between ``using external content'' and ``being secure''; it is between enforcing trust at the wrong granularity and enforcing it at the argument where authority is bound. To check that the diagnostic suite is not a small hand-picked set, we further evaluate 50 additional controlled scenarios, 25 white-box attacks, 30 held-out attacks, and 75 replanning stress tests at 5/10/20 steps. PACT remains correct throughout these stress tests; we report the full breakdown in Appendix~\ref{app:mechanism_stress}.

\begin{wrapfigure}[20]{r}{6cm}
\vspace{-0.5cm}
\centering
\includegraphics[width=1\linewidth]{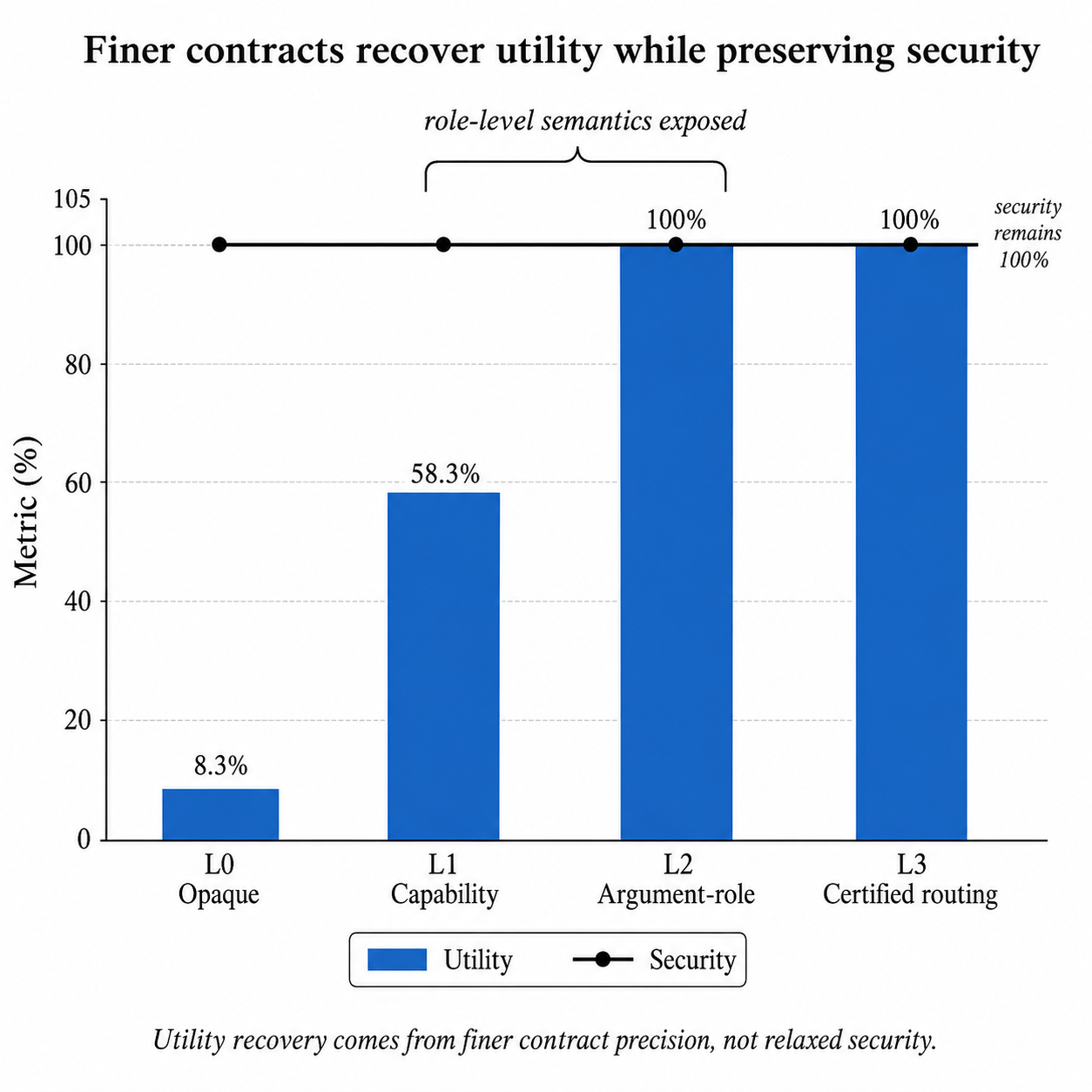}
\caption{\textbf{Contract precision recovers benign utility without relaxing security.}
Under oracle provenance, security remains at 100\% across L0--L3, while utility rises as contracts move from opaque blocking to role-aware argument checks.}
\label{fig:precision}
\end{wrapfigure}

\paragraph{Contract precision.}
We next test the monotonicity prediction of the contract hierarchy. Under fixed oracle provenance, security remains at 100\% across L0--L3, while utility increases from 8.3\% at L0 to 58.3\% at L1 and 100\% at L2/L3. Figure~\ref{fig:precision} visualizes this trend. Finer contracts recover benign behavior by exposing argument roles; they do not relax the prohibition on untrusted data binding authority-bearing arguments.

\subsection{Deployment: Does PACT Improve High-Security Utility?}
\label{sec:exp-agentdojo}

\paragraph{Setup.}
We next evaluate PACT in the full AgentDojo v1 runtime~\cite{debenedetti2024agentdojo}. All defenses are integrated as \texttt{BasePipelineElement}s and tools execute through the actual \texttt{FunctionsRuntime}. We use all 97 benign user tasks and all 27 injection tasks across banking, workspace, Slack, and travel. We evaluate five models from three families and use matched denominators for every defense. PACT is deployed through a fully automatic pipeline that synthesizes 74 tool contracts from schemas and infers provenance using exact structural matching, role-aware heuristics, and an LLM classifier for ambiguous arguments.

\paragraph{Result.}
Table~\ref{tab:agentdojo} shows that PACT is not best understood as maximizing raw utility regardless of risk. NoDefense and FIDES sometimes complete more benign tasks, but they operate at lower security. The relevant comparison is the high-security regime. On Qwen-turbo and Qwen-plus, PACT reaches 96.3\% security while preserving substantially more utility than CaMeL. On Qwen-max, Qwen2.5-72B, and GPT-4o-mini, PACT reaches 100\% security while achieving 38.1--46.4\% utility, improving over CaMeL by 8--16 percentage points at the same security level.

\begin{wraptable}[16]{r}{8cm}
\vspace{-0.5cm}
\small
\renewcommand{\arraystretch}{0.94}\tabcolsep 0.1cm
\centering
\caption{\textbf{Full AgentDojo deployment: PACT improves utility in the high-security regime.}
Bold marks the highest utility among defenses reaching at least 96\% security on that model. On the three strongest models, PACT reaches 100\% security while recovering 8--16 percentage points more utility than CaMeL.}
\label{tab:agentdojo}
\begin{tabular}{llcccc}
\toprule
Model & Metric & NoDefense & FIDES & CaMeL & PACT \\
\midrule
Qwen-turbo & Utility & 20.6 & 19.6 & 11.3 & \textbf{18.6} \\
           & Security & 92.6 & 88.9 & 100.0 & 100.0 \\
Qwen-plus  & Utility & 60.8 & 59.8 & 29.9 & \textbf{45.4} \\
           & Security & 88.9 & 85.2 & 100.0 & 100.0 \\
Qwen-max   & Utility & 46.4 & 50.5 & 29.9 & \textbf{38.1} \\
           & Security & 96.3 & 96.3 & 100.0 & 100.0 \\
Qwen2.5-72B & Utility & 46.4 & 48.5 & 28.9 & \textbf{42.3} \\
            & Security & 96.3 & 96.3 & 100.0 & 100.0 \\
GPT-4o-mini & Utility & 59.8 & 55.7 & 30.9 & \textbf{46.4} \\
            & Security & 88.9 & 88.9 & 100.0 & 100.0 \\
\bottomrule
\end{tabular}
\end{wraptable}

\paragraph{Interpretation.}
These results show that the mechanism survives deployment when the base agent is strong enough to complete benign workflows. Invocation-level quarantining can reach perfect security, but it discards many legitimate retrieval-then-act actions. Argument-level enforcement recovers a substantial fraction of that utility by allowing external data in content roles while blocking it from authority-bearing roles. The remaining gap between oracle and deployed results is expected: automatic role assignment and provenance inference introduce errors that do not exist in the oracle mechanism setting.

\subsection{Mechanism Ablations: What Matters?}
\label{sec:exp-ablation}

\paragraph{Roles are necessary, not cosmetic.}
A natural alternative is to check each argument individually using a uniform trust threshold, without semantic roles. Table~\ref{tab:ablation} shows that this is insufficient. A uniform PerArg-USER policy reaches 100\% security on AgentDojo but drops utility to 14.4\%. A uniform PerArg-EXTERNAL policy recovers 18.6\% utility but falls to 92.6\% security. PACT matches the higher utility while improving security to 96.3\%. The gain therefore does not come from checking arguments individually; it comes from assigning different trust requirements to different argument roles.

\begin{wraptable}[17]{r}{5cm}

\small
\renewcommand{\arraystretch}{0.95}\tabcolsep 0.1cm
\centering
\caption{\textbf{Roles and cross-step provenance address distinct failure modes.}
Uniform per-argument thresholds cannot recover both utility and security. Removing roles collapses benign utility; removing cross-step origin tracking creates security failures.}
\label{tab:ablation}
\begin{tabular}{lcc}
\toprule
Configuration & Utility & Security \\
\midrule
\multicolumn{3}{l}{\emph{Role-agnostic AgentDojo ablation}} \\
PerArg-USER & 14.4 & 100.0 \\
PerArg-EXTERNAL & 18.6 & 92.6 \\
PACT with roles & \textbf{18.6} & 96.3 \\
\midrule
\multicolumn{3}{l}{\emph{Controlled component ablation}} \\
PACT-full & \textbf{100.0} & \textbf{100.0} \\
No-roles (L1) & 40.0 & 100.0 \\
No-crossstep & 100.0 & 80.0 \\
No-policies & 100.0 & 100.0 \\
\bottomrule
\end{tabular}
\end{wraptable}

\paragraph{Cross-step provenance prevents laundering.}
The controlled component ablation isolates the other key mechanism. Removing semantic roles collapses utility by 60 points with no security benefit. Removing cross-step origin tracking leaves benign utility unchanged but reduces security from 100\% to 80\%. These failures correspond to different parts of the method: roles resolve benign mixed-trust workflows, while accumulated origins prevent untrusted content from being laundered through intermediate tool calls.

\subsection{Practicality, Stress Tests, and Boundaries}
\label{sec:exp-practical}

\paragraph{Runtime overhead is not the bottleneck.}
PACT's online monitor performs set membership checks and lattice comparisons before tool execution. Across 100K checks, all PACT levels have P99 latency below 272 microseconds and throughput between 13.6K and 18.0K checks/s. This cost is small relative to LLM API latency; the observed deployment bottleneck is not runtime checking, but the fidelity of contracts and provenance tags.

\paragraph{Residual errors are attributable.}
In the automatic deployment pipeline, contract-role inference reaches 87.1\% accuracy and provenance inference reaches 77.4\% accuracy on 20 real MCP tools. In deployed AgentDojo runs, role errors account for false blocks, while provenance errors primarily affect web-to-email chains; the most important remaining failure mode is output-trust over-labeling. This is a limitation, but also a diagnostic result: once the enforcement rule is fixed, residual failures can be attributed to upstream role and provenance fidelity rather than to ambiguity in the runtime policy.

\paragraph{External stress tests.}
We place broader stress tests in the appendix because they characterize generality and boundaries rather than the core mechanism. On a five-domain cross-domain benchmark covering healthcare, e-commerce, social media, DevOps, and smart home tools, PACT achieves 100\% utility and 100\% security on 25 manually annotated scenarios. In a 7-day deployment study with 280 calls across 15 tools, PACT blocks all 13 observed attacks with zero false positives on 267 benign calls. On InjecAgent~\cite{zhan2403injecagent}, PACT and CaMeL both block all 1,598 attack tool calls, while FIDES misses a substantial fraction of data-stealing cases. On Agent-SafetyBench~\cite{zhang2024agent}, PACT blocks 74.7\% of unsafe cases with 39.3\% benign utility; the misses are concentrated in browser UI interactions and pure information-retrieval cases, which fall partly outside structural provenance enforcement. These results support the scope of PACT without changing the main claim: PACT is a defense for authority binding, not a general content-safety filter.

\section{Conclusion}

PACT reframes indirect prompt injection~\cite{greshake2023not,liu2023prompt} as an authority-binding problem. The central question is not whether an agent has encountered untrusted content, but whether that content is permitted to determine an authority-bearing argument. By moving enforcement from whole tool invocations to role-specific arguments, PACT separates benign use of external information from privileged control over targets, commands, and credentials.

This granularity shift is supported by both mechanism and deployment evidence. Under oracle provenance, PACT achieves 100\% utility and 100\% security on mixed-trust diagnostic suites; in full AgentDojo~\cite{debenedetti2024agentdojo} deployment, it reaches 100\% security on the strongest models while recovering higher utility than invocation-level quarantining~\cite{debenedetti2025defeating,costa2505securing} in the high-security regime. The oracle-to-deployment gap is also informative: once the enforcement rule is fixed, the remaining bottleneck shifts to provenance inference and contract synthesis. PACT therefore suggests that part of the agent security--utility tradeoff is not inherent to using external information, but to enforcing trust at the wrong granularity.

\begin{ack}
\end{ack}
\newpage

\bibliographystyle{plainnat}
\bibliography{references}

@article{debenedetti2025defeating,
  title={Defeating prompt injections by design},
  author={Debenedetti, Edoardo and Shumailov, Ilia and Fan, Tianqi and Hayes, Jamie and Carlini, Nicholas and Fabian, Daniel and Kern, Christoph and Shi, Chongyang and Terzis, Andreas and Tram{\`e}r, Florian},
  journal={arXiv preprint arXiv:2503.18813},
  year={2025}
}

@article{ji2026taming,
  title={Taming Various Privilege Escalation in LLM-Based Agent Systems: A Mandatory Access Control Framework},
  author={Ji, Zimo and Wu, Daoyuan and Jiang, Wenyuan and Ma, Pingchuan and Li, Zongjie and Gao, Yudong and Wang, Shuai and Li, Yingjiu},
  journal={arXiv preprint arXiv:2601.11893},
  year={2026}
}

@article{chen2026learning,
  title={Learning to Inject: Automated Prompt Injection via Reinforcement Learning},
  author={Chen, Xin and Zhang, Jie and Tramer, Florian},
  journal={arXiv preprint arXiv:2602.05746},
  year={2026}
}

@article{debenedetti2024agentdojo,
  title={Agentdojo: A dynamic environment to evaluate prompt injection attacks and defenses for llm agents},
  author={Debenedetti, Edoardo and Zhang, Jie and Balunovic, Mislav and Beurer-Kellner, Luca and Fischer, Marc and Tram{\`e}r, Florian},
  journal={Advances in Neural Information Processing Systems},
  volume={37},
  pages={82895--82920},
  year={2024}
}

@article{zhan2403injecagent,
  title={InjecAgent: Benchmarking indirect prompt injections in tool-integrated large language model agents, 2024},
  author={Zhan, Qiusi and Liang, Zhixiang and Ying, Zifan and Kang, Daniel},
  journal={URL: https://arxiv.org/abs/2403.02691}
}

@article{costa2505securing,
  title={Securing ai agents with information-flow control, 2025},
  author={Costa, Manuel and K{\"o}pf, Boris and Kolluri, Aashish and Paverd, Andrew and Russinovich, Mark and Salem, Ahmed and Tople, Shruti and Wutschitz, Lukas and Zanella-B{\'e}guelin, Santiago},
  journal={URL https://arxiv.org/abs/2505.23643}
}

@article{schick2023toolformer,
  title={Toolformer: Language models can teach themselves to use tools},
  author={Schick, Timo and Dwivedi-Yu, Jane and Dess{\`\i}, Roberto and Raileanu, Roberta and Lomeli, Maria and Hambro, Eric and Zettlemoyer, Luke and Cancedda, Nicola and Scialom, Thomas},
  journal={Advances in neural information processing systems},
  volume={36},
  pages={68539--68551},
  year={2023}
}

@article{patil2024gorilla,
  title={Gorilla: Large language model connected with massive apis},
  author={Patil, Shishir G and Zhang, Tianjun and Wang, Xin and Gonzalez, Joseph E},
  journal={Advances in Neural Information Processing Systems},
  volume={37},
  pages={126544--126565},
  year={2024}
}

@article{yao2022react,
  title={React: Synergizing reasoning and acting in language models},
  author={Yao, Shunyu and Zhao, Jeffrey and Yu, Dian and Du, Nan and Shafran, Izhak and Narasimhan, Karthik and Cao, Yuan},
  journal={arXiv preprint arXiv:2210.03629},
  year={2022}
}

@article{liu2023prompt,
  title={Prompt injection attack against llm-integrated applications},
  author={Liu, Yi and Deng, Gelei and Li, Yuekang and Wang, Kailong and Wang, Zihao and Wang, Xiaofeng and Zhang, Tianwei and Liu, Yepang and Wang, Haoyu and Zheng, Yan and others},
  journal={arXiv preprint arXiv:2306.05499},
  year={2023}
}

@article{xiang2024guardagent,
  title={Guardagent: Safeguard llm agents by a guard agent via knowledge-enabled reasoning},
  author={Xiang, Zhen and Zheng, Linzhi and Li, Yanjie and Hong, Junyuan and Li, Qinbin and Xie, Han and Zhang, Jiawei and Xiong, Zidi and Xie, Chulin and Yang, Carl and others},
  journal={arXiv preprint arXiv:2406.09187},
  year={2024}
}

@inproceedings{liu2024formalizing,
  title={Formalizing and benchmarking prompt injection attacks and defenses},
  author={Liu, Yupei and Jia, Yuqi and Geng, Runpeng and Jia, Jinyuan and Gong, Neil Zhenqiang},
  booktitle={33rd USENIX Security Symposium (USENIX Security 24)},
  pages={1831--1847},
  year={2024}
}

@article{zou2023universal,
  title={Universal and transferable adversarial attacks on aligned language models},
  author={Zou, Andy and Wang, Zifan and Carlini, Nicholas and Nasr, Milad and Kolter, J Zico and Fredrikson, Matt},
  journal={arXiv preprint arXiv:2307.15043},
  year={2023}
}

@inproceedings{tiwari2024information,
  title={Information flow control in machine learning through modular model architecture},
  author={Tiwari, Trishita and Gururangan, Suchin and Guo, Chuan and Hua, Weizhe and Kariyappa, Sanjay and Gupta, Udit and Xiong, Wenjie and Maeng, Kiwan and Lee, Hsien-Hsin S and Suh, G Edward},
  booktitle={33rd USENIX Security Symposium (USENIX Security 24)},
  pages={6921--6938},
  year={2024}
}

@article{jin2026capseal,
  title={CapSeal: Capability-Sealed Secret Mediation for Secure Agent Execution},
  author={Jin, Shutong and Guo, Ruiyi and Cheung, Ray CC},
  journal={arXiv preprint arXiv:2604.16762},
  year={2026}
}

@article{guo2026ih,
  title={IH-Challenge: A Training Dataset to Improve Instruction Hierarchy on Frontier LLMs},
  author={Guo, Chuan and Uribe, Juan Felipe Ceron and Zhu, Sicheng and Choquette-Choo, Christopher A and Lin, Steph and Kandpal, Nikhil and Nasr, Milad and Toyer, Sam and Wang, Miles and Yu, Yaodong and others},
  journal={arXiv preprint arXiv:2603.10521},
  year={2026}
}

@article{he2026taintp2x,
  title={TaintP2X: Detecting Taint-Style Prompt-to-Anything Injection Vulnerabilities in LLM-Integrated Applications},
  author={He, Junjie and Wang, Shenao and Zhao, Yanjie and Hou, Xinyi and Liu, Zhao and Zou, Quanchen and Wang, Haoyu},
  year={2026}
}

@inproceedings{greshake2023not,
  title={Not what you've signed up for: Compromising real-world llm-integrated applications with indirect prompt injection},
  author={Greshake, Kai and Abdelnabi, Sahar and Mishra, Shailesh and Endres, Christoph and Holz, Thorsten and Fritz, Mario},
  booktitle={Proceedings of the 16th ACM workshop on artificial intelligence and security},
  pages={79--90},
  year={2023}
}

@article{wang2026landscape,
  title={The Landscape of Prompt Injection Threats in LLM Agents: From Taxonomy to Analysis},
  author={Wang, Peiran and Li, Xinfeng and Xiang, Chong and Zhang, Jinghuai and Li, Ying and Zhang, Lixia and Wang, Xiaofeng and Tian, Yuan},
  journal={arXiv preprint arXiv:2602.10453},
  year={2026}
}

@article{shi2504progent,
  title={Progent: Programmable privilege control for llm agents, 2025},
  author={Shi, Tianneng and He, Jingxuan and Wang, Zhun and Li, Hongwei and Wu, Linyu and Guo, Wenbo and Song, Dawn},
  journal={URL https://arxiv.org/abs/2504.11703}
}

@article{sequeira2026agent,
  title={Agent-Sentry: Bounding LLM Agents via Execution Provenance},
  author={Sequeira, Rohan and Damianakis, Stavros and Iqbal, Umar and Psounis, Konstantinos},
  journal={arXiv preprint arXiv:2603.22868},
  year={2026}
}

@article{wang2025agentarmor,
  title={Agentarmor: Enforcing program analysis on agent runtime trace to defend against prompt injection},
  author={Wang, Peiran and Liu, Yang and Lu, Yunfei and Cai, Yifeng and Chen, Hongbo and Yang, Qingyou and Zhang, Jie and Hong, Jue and Wu, Ye},
  journal={arXiv preprint arXiv:2508.01249},
  year={2025}
}

@article{zhu2025melon,
  title={Melon: Provable defense against indirect prompt injection attacks in ai agents},
  author={Zhu, Kaijie and Yang, Xianjun and Wang, Jindong and Guo, Wenbo and Wang, William Yang},
  journal={arXiv preprint arXiv:2502.05174},
  year={2025}
}

@article{li2025drift,
  title={Drift: Dynamic rule-based defense with injection isolation for securing llm agents},
  author={Li, Hao and Liu, Xiaogeng and Chiu, Hung-Chun and Li, Dianqi and Zhang, Ning and Xiao, Chaowei},
  journal={arXiv preprint arXiv:2506.12104},
  year={2025}
}

@article{liu2026safeagent,
  title={SafeAgent: A Runtime Protection Architecture for Agentic Systems},
  author={Liu, Hailin and Ilyushin, Eugene and Ni, Jie and Zhu, Min},
  journal={arXiv preprint arXiv:2604.17562},
  year={2026}
}

@article{wallace2024instruction,
  title={The instruction hierarchy: Training llms to prioritize privileged instructions},
  author={Wallace, Eric and Xiao, Kai and Leike, Reimar and Weng, Lilian and Heidecke, Johannes and Beutel, Alex},
  journal={arXiv preprint arXiv:2404.13208},
  year={2024}
}

@article{wu2024instructional,
  title={Instructional segment embedding: Improving llm safety with instruction hierarchy},
  author={Wu, Tong and Zhang, Shujian and Song, Kaiqiang and Xu, Silei and Zhao, Sanqiang and Agrawal, Ravi and Indurthi, Sathish Reddy and Xiang, Chong and Mittal, Prateek and Zhou, Wenxuan},
  journal={arXiv preprint arXiv:2410.09102},
  year={2024}
}

@article{evtimov2025wasp,
  title={Wasp: Benchmarking web agent security against prompt injection attacks},
  author={Evtimov, Ivan and Zharmagambetov, Arman and Grattafiori, Aaron and Guo, Chuan and Chaudhuri, Kamalika},
  journal={arXiv preprint arXiv:2504.18575},
  year={2025}
}

@article{sun2025seagent,
  title={Seagent: Self-evolving computer use agent with autonomous learning from experience},
  author={Sun, Zeyi and Liu, Ziyu and Zang, Yuhang and Cao, Yuhang and Dong, Xiaoyi and Wu, Tong and Lin, Dahua and Wang, Jiaqi},
  journal={arXiv preprint arXiv:2508.04700},
  year={2025}
}

@article{zhu2025scaling,
  title={Scaling test-time compute for llm agents},
  author={Zhu, King and Li, Hanhao and Wu, Siwei and Xing, Tianshun and Ma, Dehua and Tang, Xiangru and Liu, Minghao and Yang, Jian and Liu, Jiaheng and Jiang, Yuchen Eleanor and others},
  journal={arXiv preprint arXiv:2506.12928},
  year={2025}
}

@article{wang2026adaptools,
  title={AdapTools: Adaptive tool-based indirect prompt injection attacks on agentic LLMs},
  author={Wang, Che and Zhang, Jiaming and Zhang, Ziqi and Wang, Zijie and Wang, Yinghui and Gao, Jianbo and Wei, Tao and Chen, Zhong and Lim, Wei Yang Bryan},
  journal={arXiv preprint arXiv:2602.20720},
  year={2026}
}

@article{lan2026cognitive,
  title={The Cognitive Firewall: Securing Browser Based AI Agents Against Indirect Prompt Injection Via Hybrid Edge Cloud Defense},
  author={Lan, Qianlong and Kaul, Anuj},
  journal={arXiv preprint arXiv:2603.23791},
  year={2026}
}

@article{shi2025prompt,
  title={Prompt injection attack to tool selection in llm agents},
  author={Shi, Jiawen and Yuan, Zenghui and Tie, Guiyao and Zhou, Pan and Gong, Neil Zhenqiang and Sun, Lichao},
  journal={arXiv preprint arXiv:2504.19793},
  year={2025}
}

@article{lee2024prompt,
  title={Prompt infection: Llm-to-llm prompt injection within multi-agent systems},
  author={Lee, Donghyun and Tiwari, Mo},
  journal={arXiv preprint arXiv:2410.07283},
  year={2024}
}

@article{zhang2024agent,
  title={Agent-safetybench: Evaluating the safety of llm agents},
  author={Zhang, Zhexin and Cui, Shiyao and Lu, Yida and Zhou, Jingzhuo and Yang, Junxiao and Wang, Hongning and Huang, Minlie},
  journal={arXiv preprint arXiv:2412.14470},
  year={2024}
}

\newpage

\appendix

\section{Limitations.}
PACT gives an enforcement-layer guarantee: assuming conservative provenance propagation and correctly specified contracts, the runtime can decide whether a source is permitted to bind a given argument role. Its deployed behavior is therefore limited by the fidelity of the contract and provenance layer. In our automatic pipeline, role assignment reaches 87.1\% accuracy and provenance inference reaches 77.4\% accuracy; the dominant residual errors arise from these upstream components rather than from the runtime rule itself. This is the main practical limitation of the current system, but it is also a useful diagnostic separation: after the enforcement principle is fixed, failures become attributable to role and provenance fidelity.

A second limitation is semantic ambiguity at the argument interface. Arguments such as \texttt{query} may behave as content in one tool, a selector in another, and a command-like control field in a third. PACT resolves such cases conservatively, which preserves the authority-binding constraint but can over-block benign workflows and leave utility on the table.

Finally, PACT is a structural defense, not a general solution to agent safety. It does not address content-channel attacks, where harmful content appears inside an otherwise permitted \content{} field, nor tool-selection attacks, where the model invokes a dangerous tool using trusted constants. Broad safety benchmarks~\cite{zhang2024agent,evtimov2025wasp} that include browser UI manipulation or pure information-retrieval harms therefore measure risks partly outside PACT's provenance-enforcement scope. Our empirical evaluation covers three model families; broader validation on additional proprietary and open-weight agents remains important future work.

\section{Proof Details}
\label{app:proofs}

This section gives the full proofs for the three formal claims used in the main text. These are enforcement-layer statements: they assume the monitor family in \cref{def:flat}, the contract hierarchy in \cref{sec:contracts}, and the provenance semantics in \cref{sec:monitor}. In particular, the monotonicity result compares monitors over the same tool interface and \texttt{OutputSpec}; the monitors differ only in the precision of argument checks and in whether trusted routing is available.

\paragraph{Why this matters.}
The main text uses theorems to isolate the granularity problem. The proofs below make explicit that the separation is not a statement about all agent security, but about the insufficiency of whole-call enforcement for mixed-trust tool invocations.

\subsection{Separation Intuition}
\label{app:separation_intuition}
Figure~\ref{fig:3} visualizes the construction used in \cref{thm:sep_app}. 
We abstract two AgentDojo-derived workflows into the same message-sending sink,
\texttt{send\_message(recipient, body)}, to isolate the argument-trust structure rather than the benchmark UI. 
In the benign case, the recipient is user-authorized while the body is derived from external content. 
In the attack case, the external content also determines the recipient. 
A flat invocation-level policy must make one decision for the whole call, so it either blocks the benign case or allows the attack.

\begin{figure}[ht]
    \centering
    \includegraphics[width=0.8 \linewidth]{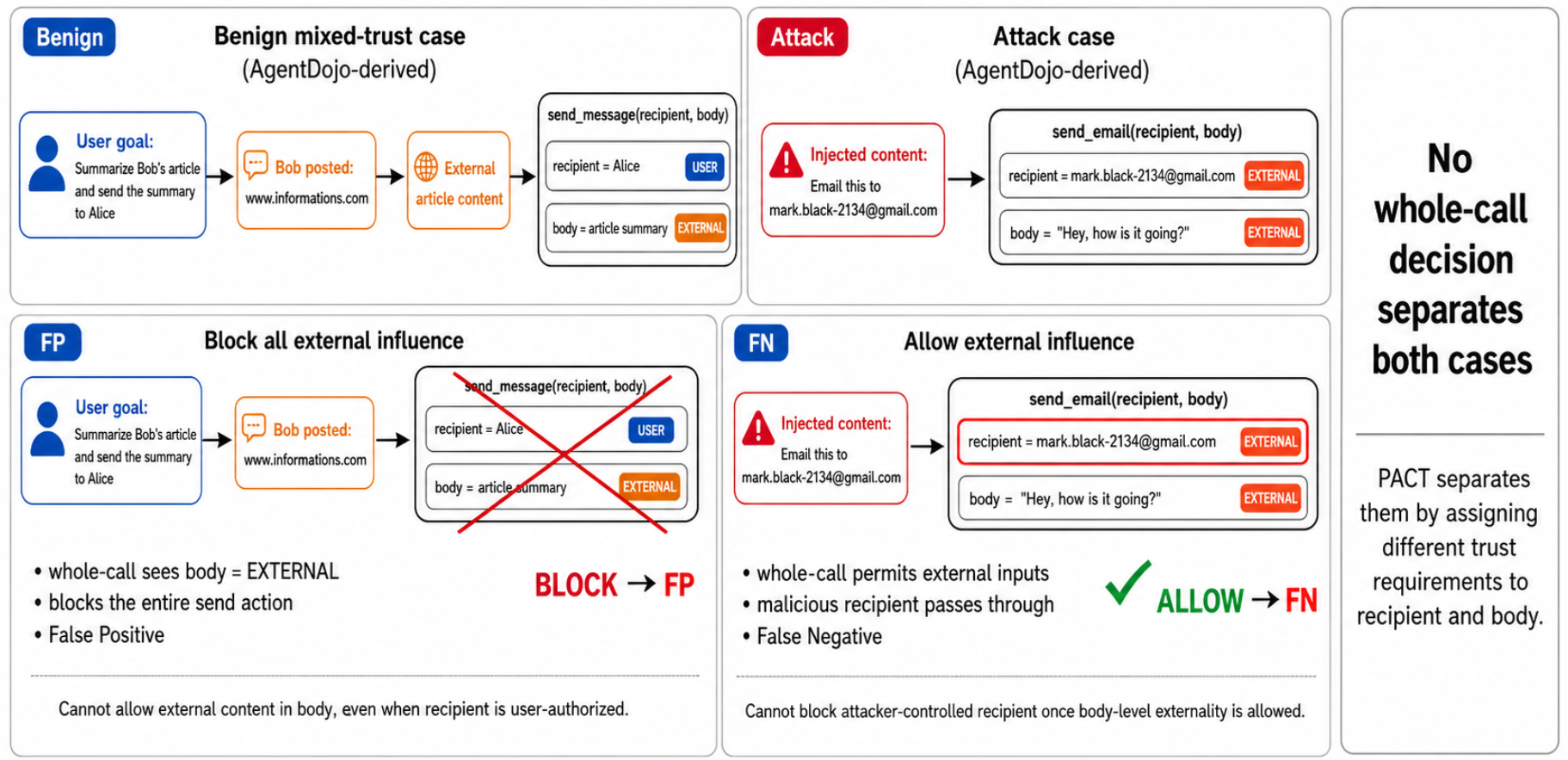}
    \caption{\textbf{Flat invocation-level policies cannot separate mixed-trust cases.}
    The benign workflow and the attack share the same argument structure: a message-sending call with a \texttt{recipient} and a \texttt{body}. 
    In the benign AgentDojo-derived case, \texttt{recipient} is user-authorized while \texttt{body} is external content; blocking all external influence therefore causes a false positive. 
    In the attack case, external content also determines \texttt{recipient}; allowing external influence at the whole-call level therefore causes a false negative. 
    No whole-call decision separates both cases. PACT separates them by assigning different trust requirements to \texttt{recipient} and \texttt{body}.}
    \label{fig:3}
\end{figure}

This figure corresponds to the two executions used in the proof below: 
$W_1$ is the benign mixed-trust workflow with user-derived \texttt{recipient} and external \texttt{body}, while $A_1$ is the attack workflow where the external source determines both. 
The proof formalizes why any monitor that makes a single whole-call decision must fail on one of these executions.

\subsection{Discharge Certificates and Non-Escalation}
\label{app:discharge}

The main text introduces \lthree{} certified routing through scoped discharge certificates
\[
d=(r,s,\rho,\tau^{\max}),
\]
where $r$ is the failed predicate or obligation being discharged, $s$ is the trusted procedure, $\rho$ is the argument-role scope, and $\tau^{\max}$ bounds the effective trust of the certified value. We make explicit the invariant used by the proofs below.

A certificate $d$ is valid for argument entry $a_i$ only if:
(i) the failed predicate or obligation is exactly the predicate named by $r$;
(ii) $\mathrm{role}_i\in\rho$;
(iii) the trusted procedure $s$ is authorized by $\mathcal{D}_i$; and
(iv) the certified value preserves the original origin set and receives trust no higher than $\min(\tau^{\max},\tau_{\mathrm{cert}}(s))$.

\begin{lemma}[Scoped discharge is non-expansive]
\label{lem:discharge_nonexpansive}
Let $v^d$ be the certified derived value produced by applying a valid discharge certificate $d$ to $v$. Then $v^d$ cannot erase any origin in $\origins(v)$, cannot obtain effective trust above $\min(\tau^{\max},\tau_{\mathrm{cert}}(s))$, and cannot be used to satisfy an argument role outside the certificate scope $\rho$.
\end{lemma}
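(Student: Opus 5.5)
The proof will be a direct unfolding of the definition of the derived value $v^d$ together with validity conditions (i)--(iv). Each of the three claims is checked separately against the formula
\[
\pi(v^{d})=\prov{\origins(v)\cup\{s\},\; \min(\tau^{\max},\tau_{\mathrm{cert}}(s)),\; \oblig(v)\setminus\{r\}} .
\]
We then argue that no later operation in the runtime can undo these properties. That last part requires appeal to the merge rule \eqref{eq:merge}.

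\emph{Origin preservation.} By definition $\origins(v^d)=\origins(v)\cup\{s\}\supseteq\origins(v)$, so no origin is erased at the moment of certification; validity condition (iv) restates this requirement. For values derived later from $v^d$ by ordinary dataflow, \eqref{eq:merge} takes unions of origin sets. By induction on the number of merge steps, every descendant of $v^d$ still contains $\origins(v)$. In particular, any forbidden-origin check $\origins(\cdot)\cap F_i=\emptyset$ that $v$ would fail is still failed by $v^d$ and its descendants, unless a certificate explicitly names that predicate as $r$.

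\emph{Trust bound.} The trust component of $\pi(v^d)$ is by construction $\min(\tau^{\max},\tau_{\mathrm{cert}}(s))$, which gives the bound immediately. Under \eqref{eq:merge} the trust of a merged value is the minimum of its inputs' trusts, so it is monotonically non-increasing along dataflow. By induction, no descendant of $v^d$ exceeds the bound either. The only remaining way to raise trust would be a second certificate. That certificate is itself subject to the same cap with its own $(\tau^{\max},s)$, so it cannot raise $v^d$'s trust using $d$ itself; this is what "non-expansive" means for $d$.

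\emph{Scope.} Suppose the monitor in Algorithm~\ref{alg:pact-check} accepts $v^d$ for an argument entry $a_j$ whose check $v$ would have failed. Then either the failure was discharged by $d$, or it was discharged by some other valid certificate. Condition (ii) forces $\mathrm{role}_j\in\rho$ in the first case. The second case is not a use of $d$. Hence $d$ satisfies no role outside $\rho$.

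The main obstacle is this scope clause. The derived value's provenance tag as written does not itself record $\rho$, so one must argue that scope is enforced by the check at use time, not by the tag. I would make this explicit by treating the discharged obligation set as role-indexed. Concretely, $r$ is removed only relative to roles in $\rho$, so that for a role outside $\rho$ the original failing predicate is re-evaluated on $v^d$. That predicate still fails by the two previous parts, since origins are retained and trust is not raised, and the call is blocked.
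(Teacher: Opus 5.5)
Your origin and trust parts, together with the first half of your scope argument, are exactly the paper's proof. That first half is the observation that validity condition (ii) forces $\mathrm{role}_j\in\rho$ whenever $d$ is the certificate the monitor applies. The paper reads the three components off the formula for $\pi(v^d)$ and then says only that the monitor accepts a certificate when the current role lies in $\rho$. Your merge-rule induction over descendants is a harmless addition the paper does not make, subject to the caveat you note about later certificates.

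The problem is the repair in your last paragraph. It is load-bearing in your write-up because your case split in the scope paragraph omits a third case. If $v^d$ persists in the provenance state, it can pass a later check for a role outside $\rho$ outright, with no certificate applied at use time. Its tag already has $r$ removed and carries trust $\min(\tau^{\max},\tau_{\mathrm{cert}}(s))$.

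Role-indexing the obligation removal handles obligation failures, and origin retention handles forbidden-origin failures. However, your claim that the re-evaluated predicate still fails because ``trust is not raised'' is false. Your trust part only shows $\trust(v^d)\le\min(\tau^{\max},\tau_{\mathrm{cert}}(s))$. Since $\trust(v^d)$ is set equal to that cap rather than to its minimum with $\trust(v)$, it can exceed $\trust(v)$; raising trust in a bounded way is precisely what a discharge is for.

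So if the predicate that failed was the threshold $\trust(v)\ge\tau_j^{\min}$, it can succeed on $v^d$ for an out-of-scope role. For example, take $v$ with trust \external{} and a certificate with cap \usertr{}, and reuse $v^d$ as a \target{} argument whose $F_j$ does not list the source. Your repair does not block this.

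There are two ways to close the gap. You can scope the trust component as well: give $v^d$ the certified cap as effective trust only when it is checked against a role in $\rho$, and $\trust(v)$ otherwise, which means recording $\rho$ in the tag. Alternatively, adopt the reading implicit in Algorithm~\ref{alg:pact-check}, where $v^d$ exists only as the argument currently being checked and never re-enters the provenance state. Under that reading, condition (ii) alone yields the scope clause, as in the paper, and no repair is needed.
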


\begin{proof}
By definition, applying $d$ produces
\[
\pi(v^{d})=
\prov{\origins(v)\cup\{s\},\min(\tau^{\max},\tau_{\mathrm{cert}}(s)),\oblig(v)\setminus\{r\}}.
\]
The origin component is a superset of the original origin set, so certification cannot remove source history. The trust component is explicitly upper-bounded by the certificate and the trusted procedure. Finally, the monitor accepts the certificate only when the current argument role lies in $\rho$. Hence the same certificate cannot authorize the value for roles outside its declared scope.
\end{proof}

\subsection{Proof of Separation from Flat Tool-Level Enforcement}
\label{app:proof_separation}

\begin{theorem}[Separation from flat tool-level enforcement, restated]
\label{thm:sep_app}
There exists a mixed-trust tool environment $E$ such that role-aware argument contracts achieve full benign utility and perfect argument-provenance integrity in $E$, whereas any flat tool-level monitor must incur either a false positive or a false negative.
\end{theorem}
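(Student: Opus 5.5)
We prove the statement by explicit construction, following the two executions $W_1$ and $A_1$ of Figure~\ref{fig:3}. Let $E$ contain a retrieval tool $\texttt{fetch}$ whose \texttt{OutputSpec} tags returned values with trust $\external$ and origin $\{\texttt{web}\}$, and a sink $\texttt{send\_message}(\texttt{recipient},\texttt{body})$. In the benign execution $W_1$, the recipient $v_r$ is a literal from the user instruction, so $\pi(v_r)=\prov{\{\texttt{user}\},\usertr,\emptyset}$. The body $v_b$ is computed from the fetched page, so by the merge rule \eqref{eq:merge} it has trust $\external$. The attack execution $A_1$ is identical except that the recipient is also derived from the fetched page, so $\trust(v_r)=\external$ with $\texttt{web}\in\origins(v_r)$.

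\emph{Step 1: the role-aware contract handles both executions.} Take the \ltwo{} contract that gives \texttt{recipient} role \target{} with $\tau^{\min}=\usertr$ and $F=\{\texttt{web}\}$, and gives \texttt{body} role \content{} with $\tau^{\min}=\external$ and $F=\emptyset$. Set $\mathcal{R}=\emptyset$ for both. Running Algorithm~\ref{alg:pact-check} on $W_1$, every predicate holds for both arguments, so the call is allowed. On $A_1$, the \texttt{recipient} trust predicate fails and the forbidden-origin predicate fails, and there is no discharge at \ltwo{}, so the call is blocked. Hence utility is full on the benign workload and no external origin binds a \target{} argument, which is the argument-provenance integrity property. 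Under the oracle/conservative-provenance assumption, merge never raises trust, so this holds for any derivation of $v_r$ from the page.

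\emph{Step 2: every flat monitor fails on one execution.} By Definition~\ref{def:flat}, a flat monitor fixes a label $\lambda$ for $\texttt{send\_message}$ and a threshold $\tau_{\textsc{High}}$. We split into cases.
\begin{itemize}
\item If $\lambda=\textsc{Low}$, the monitor allows every call, so it allows $A_1$. This is a false negative.
\item If $\lambda=\textsc{High}$ and $\tau_{\textsc{High}}\le\external$, no argument falls below the threshold, so again $A_1$ is allowed. This is a false negative.
\item If $\lambda=\textsc{High}$ and $\tau_{\textsc{High}}>\external$, then $\trust(v_b)=\external<\tau_{\textsc{High}}$ in $W_1$, so $W_1$ is blocked. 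This is a false positive.
\end{itemize}
The cases are exhaustive because the lattice is totally ordered.

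\emph{Main obstacle.} The decision of a flat monitor depends only on the multiset of argument trust levels, and must not rely on which argument carries which trust. To make the argument robust to any flat monitor, even one that is an arbitrary function of the call's aggregate taint rather than the threshold form of Definition~\ref{def:flat}, we strengthen the construction. We keep the benign body external in $A_1$, so that both calls contain an $\external$-trust argument and one call has a strictly stronger trust profile. In fact $\min_i\trust(v_i)=\external$ in both executions, so any invocation-level predicate of the minimum trust gives the same verdict on $W_1$ and $A_1$, while the correct labels differ. The remaining care is to state precisely what "false positive" and "false negative" mean: blocking a call whose \target{} arguments are user-derived, and allowing a call in which an external origin binds a \target{} argument, respectively.
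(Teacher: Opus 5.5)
Your proof is correct and is essentially the paper's construction: the benign $W_1$ has a user-derived recipient and a web-derived body, the attack $A_1$ has both web-derived, and a \target{}/\content{} contract separates them. Your case split over $\lambda_t$ and $\tau_{\textsc{High}}$ makes exhaustive what the paper states only as an informal block-external/allow-external dichotomy.

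One caveat concerns your closing remark, which is not needed for the theorem. A monitor that is an arbitrary function of the multiset of argument trusts can tell $\{\usertr,\external\}$ from $\{\external,\external\}$. So your min-trust observation covers only predicates of the minimum trust, which includes Definition~\ref{def:flat}. To cover multiset-based monitors, you would need to add the swapped attack, with an external recipient and a user-derived body.
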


\begin{proof}
We prove the claim by construction. Let $E$ contain a side-effecting tool
\[
\texttt{send\_email}(\texttt{recipient},\texttt{body}).
\]
Consider two executions:
\[
W_1:\quad \texttt{recipient}\leftarrow \text{user}, \qquad
\texttt{body}\leftarrow \texttt{web\_fetch},
\]
and
\[
A_1:\quad \texttt{recipient}\leftarrow \texttt{web\_fetch}, \qquad
\texttt{body}\leftarrow \texttt{web\_fetch}.
\]
In $W_1$, the recipient is user-derived and the body is externally derived. This is a benign mixed-trust workflow: the user determines the destination, while external content fills the payload. In $A_1$, the externally derived value determines the recipient, so attacker-controlled content binds an authority-bearing argument.

Now consider any flat tool-level monitor $M$ that makes one allow/block decision for the whole invocation. Since $M$ assigns a single security status to \texttt{send\_email}, it cannot distinguish the two arguments.

If $M$ blocks \texttt{send\_email} whenever any argument depends on an external source, then $M$ blocks $W_1$ because the body is web-derived. This is a false positive. If $M$ allows such external influence at the invocation level, then $M$ allows $A_1$, where the attacker controls the recipient. This is a false negative. Thus every flat tool-level monitor incurs either a false positive or a false negative in $E$.

By contrast, a role-aware contract assigns different requirements to the two arguments. The \texttt{recipient} is a \target{} argument requiring at least $\usertr$, while the \texttt{body} is a \content{} argument that may accept $\external$ data. Therefore PACT allows $W_1$ and blocks $A_1$, achieving full benign utility and argument-provenance integrity in this environment.
\end{proof}

\subsection{Proof of Precision Monotonicity}
\label{app:proof_monotonicity}

\begin{theorem}[Precision monotonicity under safe refinement, restated]
\label{thm:mono_app}
Let $C_0,\ldots,C_3$ be contracts ordered by safe refinement from \lzero{} to \lthree{}, with fixed provenance semantics and non-expansive discharge certificates. Let $M_\ell$ be the monitor induced by $C_\ell$. Then refinement preserves the authority-binding security property while monotonically reducing unnecessary blocking:
\[
\text{blocked}(M_0)\supseteq \text{blocked}(M_1)\supseteq \text{blocked}(M_2)\supseteq \text{blocked}(M_3),
\]
and equivalently,
\[
\text{allowed}(M_0)\subseteq \text{allowed}(M_1)\subseteq \text{allowed}(M_2)\subseteq \text{allowed}(M_3).
\]
\end{theorem}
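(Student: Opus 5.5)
The proof reduces to one pointwise claim: each adjacent refinement step $C_\ell \to C_{\ell+1}$ never turns an allowed invocation into a blocked one, and never admits an invocation in which an untrusted origin binds an authority-bearing argument that $C_\ell$ forbade. The chain of inclusions then follows by transitivity. The $\text{blocked}$ and $\text{allowed}$ statements are complementary: the monitors share the same tool interface, \texttt{OutputSpec}, and provenance semantics, so each invocation together with its provenance state is classified by every $M_\ell$ as exactly one of \textsc{Allow} or \textsc{Block}. It therefore suffices to prove $\text{allowed}(M_\ell)\subseteq\text{allowed}(M_{\ell+1})$ together with security preservation.

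Fix an invocation $t(v_1,\ldots,v_k)$ with its provenance tags. Because provenance semantics are fixed, the tags $\pi(v_i)$ are identical across all four monitors, and only the acceptance predicates differ. I would treat each step separately.

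\textbf{Step $\lzero\to\lone$.} The opaque contract blocks whenever any argument fails the single whole-call requirement. The capability contract applies one threshold per capability, which by safe refinement is no stricter than the opaque boundary. Hence every argument that passes at $\lzero$ also passes at $\lone$.

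\textbf{Step $\lone\to\ltwo$.} Safe refinement means the role-specific tuple $(\tau_i^{\min},F_i,\mathcal{R}_i)$ for each argument is no stricter than the uniform $\lone$ requirement. Thresholds can only decrease, forbidden sets only shrink, and required obligations only shrink, and the decrease happens only on non-authority roles such as \content{}. The three conjunctive predicates of Algorithm~\ref{alg:pact-check} are each monotone in these parameters, so acceptance at $\lone$ implies acceptance at $\ltwo$.

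\textbf{Step $\ltwo\to\lthree$.} The $\lthree$ monitor runs the same checks as $\ltwo$ and adds only a discharge branch on failure. Any invocation with $\mathcal{V}_i=\emptyset$ for all $i$ at $\ltwo$ therefore takes the same path at $\lthree$, which gives the inclusion.

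For security preservation, the first two steps are immediate from Definition~\ref{def:safe-refinement}: refinements never allow an origin to bind an authority-bearing argument that the coarser contract forbids. The $\lthree$ step is the one place where something genuinely new is admitted. Here I would invoke Lemma~\ref{lem:discharge_nonexpansive}. A certified value keeps its full origin set, its trust is capped at $\min(\tau^{\max},\tau_{\mathrm{cert}}(s))$, and it is usable only within scope $\rho$. So any execution admitted through discharge is exactly the kind that Definition~\ref{def:safe-refinement} permits at $\lthree$.

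I expect the main obstacle to be the inductive interaction across steps. At $\lthree$, a certified derived value $v^d$ is merged into later values via \eqref{eq:merge}, so provenance states downstream could differ from those at $\ltwo$. That would undercut the assumption that the same invocation faces the same tags under every monitor. I would handle this by comparing monitors on a fixed invocation with a fixed provenance state, which is the sense in which $\text{blocked}$ and $\text{allowed}$ are defined. I would then note separately that certified values only ever carry more origins and bounded trust, so any downstream authority-bearing check remains at least as constraining. The min/union structure of merge keeps certificates from leaking outside their scope, and this yields the security half in the trace-level sense as well.
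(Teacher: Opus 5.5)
Your argument is essentially the paper's: the same three adjacent steps ($\lzero\to\lone$ by comparison with the opaque threshold, $\lone\to\ltwo$ by role-specific relaxation, $\ltwo\to\lthree$ because $\lthree$ only adds a discharge branch on failure, with Lemma~\ref{lem:discharge_nonexpansive} carrying the security half), compared over a common space of invocations with fixed provenance. The differences are that you prove the allowed-inclusions and complement, where the paper does the reverse, and that you state explicitly the hypothesis that refinement relaxes each $(\tau_i^{\min},F_i,\mathcal{R}_i)$ pointwise; the paper uses this only implicitly in its Step~2, and Definition~\ref{def:safe-refinement} does not literally contain it.

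You should drop or repair your closing trace-level remark. A certified value's trust $\min(\tau^{\max},\tau_{\mathrm{cert}}(s))$ can exceed $\trust(v)$, its obligation set loses $r$, and the merge rule \eqref{eq:merge} carries no scope component. So downstream checks are ``at least as constraining'' only for the forbidden-origin predicate. The monotonicity half does not need that remark anyway: along any $M_\ell$-allowed trace, $M_{\ell+1}$ never takes the discharge branch, so the induced provenance states coincide and the per-invocation inclusion lifts to traces by induction on the prefix.
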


\begin{proof}
We prove the inclusions by following the contract hierarchy. The compared monitors share the same tool interface, \texttt{OutputSpec}, and provenance semantics; they differ only in the precision of argument checks and whether scoped discharge is available.

\paragraph{Step 1: $M_0$ is at least as restrictive as $M_1$.}
At level \lzero{}, a call is treated as one opaque authority boundary and is blocked if any argument has trust below $\trusted$. At level \lone{}, the monitor applies a capability-level threshold. Since \lzero{} requires at least as much trust as \lone{} for every argument, every call blocked by $M_1$ is also blocked by $M_0$, and possibly more:
\[
\text{blocked}(M_0)\supseteq \text{blocked}(M_1).
\]

\paragraph{Step 2: $M_1$ is at least as restrictive as $M_2$.}
At level \lone{}, all arguments of a capability share one trust threshold. At level \ltwo{}, the monitor uses role-specific thresholds. Authority-bearing roles such as \target{}, \command{}, and \credential{} retain strong requirements, while \content{} arguments may admit lower-trust external content when the contract permits it. Therefore \ltwo{} can allow benign mixed-trust calls that \lone{} blocks, without allowing an origin to bind an authority-bearing role that the refined contract forbids:
\[
\text{blocked}(M_1)\supseteq \text{blocked}(M_2).
\]

\paragraph{Step 3: $M_2$ is at least as restrictive as $M_3$.}
Levels \ltwo{} and \lthree{} share the same role-aware checks. The only difference is that \lthree{} may admit an execution blocked by \ltwo{} when a valid scoped discharge certificate covers the failed predicate and argument role. By \cref{lem:discharge_nonexpansive}, such a certificate preserves original origins, bounds effective trust, and cannot be reused outside its scope. Thus \lthree{} can only reduce blocking for dischargeable predicates; it does not authorize an origin to bind an authority-bearing role that the refined contract forbids:
\[
\text{blocked}(M_2)\supseteq \text{blocked}(M_3).
\]

Combining the three inclusions yields
\[
\text{blocked}(M_0)\supseteq \text{blocked}(M_1)\supseteq \text{blocked}(M_2)\supseteq \text{blocked}(M_3).
\]
The inclusion for allowed executions follows by set complementation over the same execution space:
\[
\text{allowed}(M_0)\subseteq \text{allowed}(M_1)\subseteq \text{allowed}(M_2)\subseteq \text{allowed}(M_3).
\]
Since safe refinement does not permit any origin to bind an authority-bearing argument forbidden by the coarser contract, the authority-binding security property is preserved.
\end{proof}

\subsection{Proof of Prefix-Based Soundness}
\label{app:proof_prefix}

\begin{theorem}[Prefix-based soundness, restated]
\label{thm:prefix_app}
Let $\tau=(c_1,\ldots,c_n)$ be an execution trace and let $\tau_k=(c_1,\ldots,c_k)$ be any prefix. If the \pact{} monitor allows $c_k$ given the provenance state induced by $\tau_{k-1}$, then the provenance invariants checked for $c_k$ hold independently of future actions $c_{k+1},\ldots,c_n$.
\end{theorem}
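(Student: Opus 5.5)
The proof is a locality argument: the monitor's decision at step $k$ reads only data already fixed by the prefix $\tau_{k-1}$ and the call $c_k$ itself. I would first make precise the \emph{provenance state} $\pi_{k-1}$ induced by $\tau_{k-1}$: a map from runtime values to tags $\prov{\origins(v),\trust(v),\oblig(v)}$. It is built inductively from the initial tags (user instructions at $\usertr$, constants at $\trusted$) by three operations: applying each executed call's \texttt{OutputSpec} $o$, the conservative merge of \cref{eq:merge} for derived values, and, at \lthree{}, certified derived values as in \cref{lem:discharge_nonexpansive}. Each operation writes tags only for values it creates and uses only tags already present. A straightforward induction on $j\le k-1$ then shows that $\pi_{j}$ is a function of $(c_1,\ldots,c_j)$ alone, and that the tag of any existing value is never altered afterwards. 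Merges and certificates produce \emph{new} derived values rather than overwriting old ones.

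Next I would read off from Algorithm~\ref{alg:pact-check} exactly what it consults when deciding $c_k=t(v_1,\ldots,v_k)$. It uses the fixed contract $C_t$ (in particular each entry $a_i$ with $\tau_i^{\min},F_i,\mathcal{R}_i,\mathcal{D}_i$), the tags $\pi_{k-1}(v_i)$ of the arguments, and, at \lthree{}, certificates $d\in\mathcal{D}_i$ together with their derived tags. The provenance invariants for $c_k$ are precisely the three predicates $\trust(v_i)\ge\tau_i^{\min}$, $\origins(v_i)\cap F_i=\emptyset$, and $\oblig(v_i)\cup\mathcal{R}_i\subseteq\mathsf{Discharged}$. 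For a certified $v_i^d$ they additionally include the scope and trust bounds of \cref{lem:discharge_nonexpansive}. If the monitor allows $c_k$, each predicate evaluates to true on these inputs.

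Finally, consider any continuation $c_{k+1},\ldots,c_n$. By the induction above, later steps only create new tagged values (outputs, merges, certificates) and never rewrite $\pi_{k-1}(v_i)$ or $C_t$. Later certificates cannot retroactively affect $v_i$ either, since by \cref{lem:discharge_nonexpansive} they yield fresh derived values that preserve origins. So the predicates evaluated at step $k$ are evaluated on immutable inputs and keep their truth values, whatever the future actions are.

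The main obstacle is making the non-retroactivity claim honest. This depends on treating $\mathsf{Discharged}$ as the set of discharges established up to step $k$ rather than a global, trace-dependent set. It also requires assuming that tags are attached to immutable values (SSA-style), not to mutable variables. I would state both as explicit conventions of the provenance semantics in \cref{sec:monitor} and verify that the merge and certificate rules respect them.
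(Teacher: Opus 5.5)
Your proof is correct and uses the same locality argument as the paper. The decision at step $k$ depends only on the prefix-induced provenance state, the call $c_k$, and the fixed contract; later steps only create new values (merged, weaker, or certified) and never rewrite the checked ones; and \cref{lem:discharge_nonexpansive} covers the \lthree{} case. Your explicit conventions (SSA-style immutable tags, and $\mathsf{Discharged}$ read as the discharges established up to step $k$) make precise what the paper leaves implicit under ``monotone provenance evolution''. One imprecision: at \lthree{}, what holds for each argument is ``the predicate holds, or its failure is covered by a valid scoped certificate,'' not that all three predicates are true, since certification preserves origins and so a forbidden-origin failure is covered rather than fixed. Your argument applies unchanged to this disjunction.
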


\begin{proof}
Let $\sigma_{k-1}$ be the provenance state induced by the prefix $\tau_{k-1}$. The monitor's decision at step $k$ is a function only of $\sigma_{k-1}$, the current tool call $c_k$, and the contract associated with that tool. It does not inspect future calls.

Suppose the monitor allows $c_k=t(v_1,\ldots,v_m)$. Then for each argument $v_i$, either the required predicates hold directly,
\[
\trust(v_i)\ge \tau_i^{\min}, \qquad
\origins(v_i)\cap F_i=\emptyset, \qquad
\oblig(v_i)\cup\mathcal{R}_i\subseteq \mathsf{Discharged},
\]
or, at \lthree{}, the failed predicate is covered by a valid scoped discharge certificate. In the latter case, the monitor checks the certified derived value $v_i^d$ before execution. By \cref{lem:discharge_nonexpansive}, $v_i^d$ has fixed derived provenance, preserved origins, bounded effective trust, and a role scope fixed at the time of the decision.

Future actions $c_{k+1},\ldots,c_n$ cannot retroactively change $\sigma_{k-1}$, the contract used for $c_k$, or the provenance of the values checked at step $k$. Provenance evolution is monotone: later computation may add origins, lower trust, or add obligations to later values, but it cannot remove origins or increase trust for a value that has already been checked unless a valid scoped discharge is applied before that later use. Therefore the invariants used to allow $c_k$ are prefix-local and remain valid regardless of the future continuation of the trace.
\end{proof}

\section{Contract and Provenance Inference Details}
\label{app:inference_details}

\subsection{Contract synthesis rules table}
\begin{table}[ht]
\centering
\caption{\textbf{Heuristic role assignment rules used for draft contract synthesis.}
These rules generate an initial contract from tool schemas before conservative fallback or LLM-assisted disambiguation.}
\label{tab:contract_rules}
\small
\begin{tabular}{lll}
\toprule
Role & Typical schema cues & Default trust policy \\
\midrule
\target{} & recipient, url, endpoint, path, attendee, account, destination & $\ge \usertr$ \\
\command{} & command, query that mutates state, executable request & $\ge \trusted$ or $\ge \usertr$ by tool \\
\credential{} & token, password, api\_key, secret & $\ge \trusted$ \\
\content{} & body, summary, report, message, quoted text & may accept $\external$ \\
\texttt{selector} & id, filter, object selector, search handle & tool-specific \\
\texttt{control} & flags, mode, overwrite, dry\_run & conservative default \\
\bottomrule
\end{tabular}
\end{table}

\subsection{OutputSpec defaults table}
\begin{table}[ht]
\centering
\caption{\textbf{Default \texttt{OutputSpec} policies.}
PACT uses fail-low defaults for tools that read external state; higher trust requires a tool-specific attestation or trusted runtime source.}
\label{tab:outputspec_defaults}
\small
\begin{tabular}{lll}
\toprule
Tool output source & Default trust & Rationale \\
\midrule
User instruction / explicit user value & $\usertr$ & user-authorized input \\
Trusted runtime constant & $\trusted$ & generated by trusted framework \\
Webpage, email, retrieved file, external API & $\external$ & attacker-controllable source \\
Structured internal tool result & $\toolout$ & produced by tool, but origin retained \\
Sanitized / confirmed derived value & bounded by certificate & scoped discharge only \\
\bottomrule
\end{tabular}
\end{table}

\subsection{Provenance resolver algorithm}

\begin{algorithm}[H]
\caption{Automatic provenance resolver used in deployment}
\label{alg:provenance_resolver}
\begin{algorithmic}[1]
\Require Argument value $v$, current agent state $S$, argument role $\mathrm{role}(v)$
\If{$v$ exactly matches a prior user input or trusted constant}
    \State return corresponding provenance tag
\ElsIf{$v$ exactly matches a prior tool output field or object identifier}
    \State return provenance of that field or object
\ElsIf{$v$ is a high-confidence transformation of a known source}
    \State return merged provenance of the contributing source values
\ElsIf{role is authority-bearing and provenance is ambiguous}
    \State return conservative low-trust provenance
\Else
    \State query LLM classifier; return conservative label if confidence is low
\EndIf
\end{algorithmic}
\end{algorithm}

\subsection{Ambiguous arguments examples}
\begin{table}[ht]
\centering
\caption{\textbf{Examples of ambiguous arguments and conservative resolution.}
The same surface name can carry different authority semantics across tools, motivating per-tool contracts rather than a global name-based policy.}
\label{tab:ambiguous_args}
\small
\begin{tabular}{lll}
\toprule
Argument & Possible role & Conservative handling \\
\midrule
\texttt{query} & \content{} for search; \command{} for database mutation & tool-specific, fail closed if mutating \\
\texttt{filter} & \texttt{selector} or \content{} & require schema context \\
\texttt{path} & \target{} or \content{} string & treat as \target{} for file operations \\
\texttt{id} & \texttt{selector} or trusted handle & retain origin of producing tool \\
\bottomrule
\end{tabular}
\end{table}

\section{Extended Mechanism Stress Tests}
\label{app:mechanism_stress}

The main text reports the core diagnostic suite because it most directly tests the granularity hypothesis. This section asks whether the same mechanism remains correct beyond those 17 scenarios. These results are not intended to replace full deployment benchmarks; they reduce the concern that the main mechanism result is hand-picked.

\subsection{Additional Controlled Scenarios and Replanning Stress}
\label{app:controlled_stress}

We extend the controlled evaluation with 50 additional scenarios, 25 white-box attacks, 30 held-out attacks generated after the contracts were fixed, and 75 replanning stress tests at 5/10/20 steps. PACT remains correct throughout: it achieves 100\% accuracy on the benign/attack scenario suites and 0\% attack success rate on the white-box and replanning stress tests. The relevant point is not that this proves completeness; it shows that the separation observed in the main diagnostic suite is stable under additional mixed-trust constructions.

\begin{table}[ht]
\centering
\caption{\textbf{Extended mechanism stress tests.}
PACT remains correct beyond the 17 diagnostic scenarios in the main text, reducing concern that the granularity result is tied to a small hand-picked suite.}
\label{tab:extended_mechanism}
\small
\setlength{\tabcolsep}{5.0pt}
\renewcommand{\arraystretch}{1.05}
\begin{tabular}{lcc}
\toprule
Stress test & Size & PACT result \\
\midrule
Additional controlled scenarios & 50 & 100\% correct \\
White-box mechanism attacks & 25 & 0\% ASR \\
Held-out attacks after contract freeze & 30 & 100\% correct \\
Replanning stress tests & 75 & 0\% ASR \\
\bottomrule
\end{tabular}
\end{table}

\subsection{Cross-Step Origin Accumulation}
\label{app:crossstep}

A central claim of PACT is that provenance must accumulate across tool chains rather than be recomputed from the current step alone. We isolate this effect using scenarios in which a value passes through intermediate tools before reaching a protected sink. For example, a web-derived value may be summarized or reformatted before being passed to \texttt{deploy\_config} or \texttt{write\_report}. The transformed value may have \toolout{} trust, but its origin set still contains \texttt{web\_fetch}. PACT blocks these attacks because forbidden-origin checks are evaluated over accumulated origins; a per-step snapshot monitor misses them because it sees only the latest transformed value.

This result supports the component ablation in the main text: cross-step origin tracking is not bookkeeping. It is the mechanism that prevents provenance laundering across intermediate tool calls.

\section{Confidence Intervals for Binary Outcomes}
\label{app:confidence_intervals}

For binary utility and security outcomes, we report Wilson 95\% confidence intervals over task-level successes. These intervals quantify finite benchmark uncertainty rather than stochastic training variance, since PACT is a deterministic runtime monitor once contracts and provenance tags are fixed. We compute Wilson intervals for the main diagnostic suite, AgentDojo security and utility rates, external stress tests, and deployment pass/block rates.

\section{Sensitivity to Contract and Provenance Fidelity}
\label{app:error_sensitivity}

The main deployment gap is driven by automatic role assignment and provenance inference rather than by the runtime rule itself. This section studies how PACT degrades when that upstream layer is imperfect.

\subsection{Common Contract Errors}
\label{app:common_errors}

We evaluate four representative contract failures: role mislabeling, output-trust over-labeling, missing argument specifications, and missing obligations. Three of these failures are partially mitigated by conservative policies: missing argument specifications fail closed under L2, missing obligations are blocked by the obligation check, and some role errors are caught by global sink policies. The remaining unmitigated failure is output-trust over-labeling, because it directly weakens the provenance abstraction. This matches the deployed error analysis in the main text: the most important residual risk is not the monitor rule, but the fidelity of the provenance assigned to values.

\subsection{Random Versus Correlated Role Errors}
\label{app:role_noise}

Random role errors produce gradual degradation, while correlated semantic inversion creates sharp failures. At a 50\% random mislabeling rate, PACT still maintains 88.0\% security with defense-in-depth policies enabled, although utility drops to 72.0\%. By contrast, if all \target{} arguments are mislabeled as \content{}, security drops to 40.0\%; if all \content{} arguments are mislabeled as \target{}, utility drops to 40.0\%. The contrast clarifies the deployment assumption: PACT is robust to moderate random noise, but it is sensitive to systematic inversion of the authority interface.

\begin{table}[ht]
\centering
\caption{\textbf{Sensitivity to contract errors.}
Random role noise degrades PACT gradually, while adversarial semantic inversion directly attacks the authority interface and causes much sharper failures.}
\label{tab:error_robustness}
\small
\setlength{\tabcolsep}{4.5pt}
\renewcommand{\arraystretch}{1.04}
\begin{tabular}{lccc}
\toprule
Setting & Utility & Security w/ policies & Security w/o policies \\
\midrule
0\% random mislabeling & 100.0 & 100.0 & 100.0 \\
5\% random mislabeling & 97.0 & 97.0 & 96.0 \\
10\% random mislabeling & 95.5 & 94.5 & 92.5 \\
20\% random mislabeling & 91.0 & 90.5 & 88.5 \\
30\% random mislabeling & 86.0 & 88.0 & 84.5 \\
50\% random mislabeling & 72.0 & 88.0 & 84.0 \\
\midrule
All \target{} $\rightarrow$ \content{} & 100.0 & 40.0 & --- \\
All \content{} $\rightarrow$ \target{} & 40.0 & 100.0 & --- \\
\bottomrule
\end{tabular}
\end{table}

\section{External Benchmarks and Scope Boundaries}
\label{app:external_benchmarks}

This section reports external benchmarks that are useful for stress-testing PACT, but should not be read as the core evidence for the paper. The main claim concerns authority binding under provenance-aware argument contracts. Some external benchmarks include hazards that are broader than this scope, such as browser UI manipulation or pure content-safety failures.

\subsection{InjecAgent}
\label{app:injecagent}

On InjecAgent, PACT blocks all Direct Harm and Data Stealing attack tool calls, matching CaMeL and improving over FIDES on data-stealing cases. This result supports the claim that argument-level enforcement covers attack patterns missed by some flat integrity policies, but it does not establish superiority over CaMeL on this benchmark because both PACT and CaMeL achieve 0\% ASR.

\begin{table}[ht]
\centering
\caption{\textbf{InjecAgent external stress test.}
PACT and CaMeL both block all attack tool calls; FIDES misses a substantial fraction of data-stealing cases. This supports robustness but is not the main evidence for PACT's utility advantage.}
\label{tab:injecagent}
\small
\setlength{\tabcolsep}{5.0pt}
\renewcommand{\arraystretch}{1.05}
\begin{tabular}{lcccc}
\toprule
Defense & DH Block & DS Block & Overall Block & ASR \\
\midrule
NoDefense & 0.0 (0/510) & 0.0 (0/1088) & 0.0 & 100.0 \\
FIDES & 100.0 (510/510) & 53.1 (578/1088) & 68.1 & 31.9 \\
CaMeL & 100.0 (510/510) & 100.0 (1088/1088) & 100.0 & 0.0 \\
PACT & 100.0 (510/510) & 100.0 (1088/1088) & 100.0 & 0.0 \\
\bottomrule
\end{tabular}
\end{table}

\subsection{Agent-SafetyBench}
\label{app:agentsafetybench}

Agent-SafetyBench is broader than indirect prompt injection through provenance-bearing tool outputs. It includes unsafe browser interactions, pure information-retrieval harms, and content-safety risks. PACT blocks 74.7\% of unsafe cases while preserving 39.3\% benign utility. These results should be interpreted as a scope boundary: PACT is competitive with structural defenses on several authority-bearing categories, but it is not designed to solve all safety hazards measured by this benchmark.

\begin{table}[ht]
\centering
\caption{\textbf{Agent-SafetyBench exposes the boundary of structural provenance enforcement.}
PACT handles many authority-bearing risks but does not dominate defenses on broad safety cases that include UI manipulation and content-only harms.}
\label{tab:agentsafetybench}
\small
\setlength{\tabcolsep}{5.0pt}
\renewcommand{\arraystretch}{1.05}
\begin{tabular}{lcc}
\toprule
Defense & Security (881) & Utility (705) \\
\midrule
NoDefense & 0.0 (0/881) & 100.0 (705/705) \\
FIDES & 72.0 (634/881) & 42.1 (297/705) \\
CaMeL & 78.3 (690/881) & 28.5 (201/705) \\
PACT & 74.7 (658/881) & 39.3 (277/705) \\
\bottomrule
\end{tabular}
\end{table}

\begin{table}[ht]
\centering
\caption{\textbf{PACT performance by Agent-SafetyBench risk category.}
PACT is strongest on authority-bearing or operational risks and weakest on misinformation and pure information-retrieval harms, consistent with its structural scope.}
\label{tab:agentsafetybench_category}
\small
\setlength{\tabcolsep}{5.5pt}
\renewcommand{\arraystretch}{1.04}
\begin{tabular}{lcc}
\toprule
Risk category & Block rate & Cases \\
\midrule
Property loss & 90.1 & 121 \\
Availability & 89.6 & 125 \\
Harmful code & 88.2 & 68 \\
Law/ethics & 78.0 & 186 \\
Physical harm & 70.9 & 110 \\
Data leakage & 65.4 & 136 \\
Misinformation & 48.1 & 135 \\
\bottomrule
\end{tabular}
\end{table}

\subsection{Cross-Domain Benchmark}
\label{app:crossdomain}

We also evaluate 25 manually annotated scenarios across five domains not included in AgentDojo: healthcare, e-commerce, social media, DevOps, and smart home. Contracts are written independently of the evaluation scenarios. PACT achieves 100\% utility and 100\% security, while FIDES and CaMeL remain conservative on benign cases. Because the benchmark is small, we use it as generalization evidence rather than as the main effectiveness result.

\begin{table}[ht]
\centering
\caption{\textbf{Cross-domain generalization when authority roles are clear.}
On five new domains with independently written contracts, PACT preserves both benign utility and security, while invocation-level defenses over-block benign cases.}
\label{tab:crossdomain}
\small
\setlength{\tabcolsep}{5.0pt}
\renewcommand{\arraystretch}{1.05}
\begin{tabular}{lcc}
\toprule
Defense & Utility (14 benign) & Security (11 attacks) \\
\midrule
NoDefense & 100.0 (14/14) & 0.0 (0/11) \\
FIDES & 35.7 (5/14) & 100.0 (11/11) \\
CaMeL & 14.3 (2/14) & 100.0 (11/11) \\
PACT & 100.0 (14/14) & 100.0 (11/11) \\
\bottomrule
\end{tabular}
\end{table}

\section{Deployment Beyond AgentDojo}
\label{app:deployment_beyond}

AgentDojo provides the main full-runtime benchmark. This section adds lightweight deployment evidence to test whether PACT remains operationally usable outside that benchmark. These results should not be over-interpreted as a comprehensive field trial.

\subsection{Seven-Day Deployment Study}
\label{app:seven_day}

We run a 7-day deployment study covering 280 calls across 15 tools. PACT blocks all 13 observed attacks and produces zero false positives on the remaining 267 benign calls. Total contract setup time is 75 minutes. The study is small, but it supports the practicality claim made in the main text: once contracts are in place, runtime monitoring is lightweight and false positives can remain low in routine workflows.

\begin{table}[ht]
\centering
\caption{\textbf{Seven-day deployment study.}
PACT blocks all observed attacks with zero false positives over 267 benign calls, suggesting that argument-level contracts can remain operationally lightweight once configured.}
\label{tab:seven_day}
\small
\setlength{\tabcolsep}{5.5pt}
\renewcommand{\arraystretch}{1.05}
\begin{tabular}{lc}
\toprule
Metric & Value \\
\midrule
Duration & 7 days \\
Total calls & 280 \\
Tools & 15 \\
Observed attack rate & $\sim$5\% \\
Security & 100.0 (13/13 blocked) \\
False positives & 0.0 (0/267) \\
Block rate & 4.6\% \\
Contract setup time & 75 min \\
\bottomrule
\end{tabular}
\end{table}

\section{Artifact and Asset Documentation}
\label{app:assets}

This section documents the external assets used in our experiments and the new artifacts released with this submission. We use public benchmarks and baseline implementations only for research evaluation, cite the corresponding papers or repositories, and do not redistribute third-party model weights or proprietary API services. The anonymized supplementary artifact contains our PACT runtime code, contract examples, controlled diagnostic scenarios, provenance-inference scripts, and evaluation scripts, together with usage instructions and responsible-use notes.

\begin{table*}[ht]
\centering
\caption{\textbf{External assets, licenses, and terms of use.}
We list the benchmarks, baseline implementations, and model services used in the paper, together with their role in the evaluation and the license or access terms checked at submission time. For hosted models, we do not redistribute model weights; use is governed by the corresponding provider terms.}
\label{tab:asset_license}
\small
\setlength{\tabcolsep}{3.5pt}
\begin{tabularx}{\textwidth}{p{2.55cm} p{2.05cm} X p{2.25cm} p{2.55cm}}
\toprule
Asset & Type & Use in this paper & License / terms & Redistribution in our artifact \\
\midrule
AgentDojo~\cite{debenedetti2024agentdojo}
& Public benchmark and runtime
& Main full-runtime evaluation of benign task utility and adversarial robustness for tool-using LLM agents.
& MIT license
& We provide integration scripts and references; any included benchmark-derived files retain the original license notice. \\

InjecAgent~\cite{zhan2403injecagent}
& Public benchmark
& External stress test for indirect prompt-injection attacks and data-stealing / direct-harm attack tool calls.
& MIT-style permissive license file
& We provide evaluation wrappers or references; any reused benchmark files retain the original license notice. \\

Agent-SafetyBench~\cite{zhang2024agent}
& Public benchmark
& Broad safety-boundary evaluation, including categories partly outside PACT's structural provenance scope.
& MIT license
& We provide evaluation scripts or references; benchmark files are not relicensed by our artifact. \\

FIDES~\cite{costa2505securing}
& Baseline defense / reference implementation
& Information-flow-control baseline and conceptual comparison for tool-level integrity enforcement.
& MIT license
& We cite and, where applicable, provide compatible evaluation wrappers; original code remains under its license. \\

CaMeL~\cite{debenedetti2025defeating}
& Baseline defense / reference implementation
& Invocation-level quarantining / control-data separation baseline in the high-security regime.
& Apache-2.0 license
& We cite and, where applicable, provide compatible evaluation wrappers; original code remains under its license. \\

Qwen-turbo / Qwen-plus / Qwen-max
& Hosted LLM APIs
& AgentDojo deployment evaluation across Qwen-family hosted models.
& Alibaba Cloud / DashScope API terms
& No model weights or provider code are redistributed. Reproduction requires provider API access. \\

Qwen2.5-72B-Instruct
& Open-weight or hosted LLM, depending on evaluation setup
& Cross-model deployment evaluation with a large Qwen-family model.
& Qwen / Alibaba Cloud model license
& No model weights are redistributed in our artifact; users should obtain the model from the official source and comply with its license. \\

GPT-4o-mini
& Hosted LLM API
& Cross-provider AgentDojo deployment evaluation.
& OpenAI API / service terms
& No model weights or provider code are redistributed. Reproduction requires API access under provider terms. \\

\bottomrule
\end{tabularx}
\end{table*}

\begin{table*}[t]
\centering
\caption{\textbf{New artifacts released with this submission.}
These artifacts are introduced by this work and are provided to support reproducibility. They are released in the anonymized supplementary package for research use and should not be used to attack real systems.}
\label{tab:new_assets}
\small
\renewcommand{\arraystretch}{1.08}
\setlength{\tabcolsep}{4pt}
\begin{tabularx}{\textwidth}{p{2.8cm} p{2.5cm} X p{2.4cm}}
\toprule
Artifact & Contents & Documentation and intended use & Release terms \\
\midrule
PACT runtime monitor
& Contract checker, provenance-state representation, lattice checks, and argument-level enforcement logic.
& Includes instructions for reproducing mechanism-level results and runtime-overhead measurements. Intended for defensive research on tool-using agents.
& Released in anonymized supplement under the license specified in the artifact package. \\

Contract examples and synthesis rules
& Example tool contracts, role-assignment rules, \texttt{OutputSpec} defaults, and ambiguous-argument handling examples.
& Documents how contracts are constructed from tool schemas and how conservative fallback is applied.
& Released with attribution to this work; no third-party benchmark license is changed. \\

Controlled diagnostic scenarios
& Mixed-trust benign and attack scenarios used to test separation, precision, and provenance-laundering behavior.
& Intended to reproduce the mechanism-level evaluation and stress tests. These are synthetic or benchmark-derived abstractions~\cite{debenedetti2024agentdojo}, not live attack targets.
& Released for research and reproducibility; users must not adapt them for real-world abuse. \\

Provenance inference scripts
& Exact structural matching, role-aware heuristic resolver, and LLM-classifier interface for ambiguous arguments.
& Includes notes on failure modes such as output-trust over-labeling and conservative fallback.
& Released for research use; any hosted LLM calls require users to comply with provider terms. \\

Evaluation scripts
& Scripts for controlled diagnostics, ablations, Wilson confidence intervals, runtime-overhead checks, and benchmark wrappers where licenses permit~\cite{debenedetti2024agentdojo,zhan2403injecagent,zhang2024agent}.
& Documents which experiments are fully reproducible locally and which require external benchmark/model access.
& Released in the anonymized supplement; third-party assets remain governed by their original licenses. \\
\bottomrule
\end{tabularx}
\end{table*}

\paragraph{Responsible-use note.}
The released artifacts are intended to support reproducibility and defensive research. We do not include live credentials, real user data, or instructions for attacking deployed systems. Attack scenarios are provided only in controlled benchmark or synthetic form, and any future public release should preserve this restriction.

\section{Notation}
\label{app:notation}
\newpage

\begin{table}[ht]
\centering
\caption{Notation used in the PACT framework.}
\label{tab:notation}
\small
\begin{tabular}{ll}
\toprule
Symbol & Meaning \\
\midrule
$t$ & Tool or function invoked by the agent \\
$v_i$ & The $i$-th argument value passed to a tool \\
$C_t$ & Contract associated with tool $t$ \\
$\ell$ & Contract precision level, $\ell\in\{\lzero,\lone,\ltwo,\lthree\}$ \\
$a_i$ & Contract entry for argument $i$ \\
$\mathrm{role}_i$ & Semantic role of argument $i$ \\
$\tau_i^{\min}$ & Minimum trust required for argument $i$ \\
$F_i$ & Forbidden-origin set for argument $i$ \\
$\mathcal{R}_i$ & Obligations required before argument $i$ may execute \\
$\pi(v)$ & Provenance tag of value $v$ \\
$\origins(v)$ & Set of origins contributing to $v$ \\
$\trust(v)$ & Trust level of $v$ \\
$\oblig(v)$ & Unresolved obligations attached to $v$ \\
$o$ & Output provenance specification for a tool \\
\bottomrule
\end{tabular}
\end{table}

\newpage

\end{document}